\let\oldfootnote\footnote
\renewcommand{\footnote}[1]{\oldfootnote{\setstretch{1}#1}}
\newtheorem{definition}{Definition}
\newtheorem{theorem}{Theorem}
\newtheorem{lemma}{Lemma}
\newtheorem{conjecture}{Conjecture}
\newtheorem{example}{Example}
\newtheorem{proof}{Proof}
  \let\oldparagraph\paragraph
  \renewcommand{\paragraph}{
    \@ifstar
      \xxxParagraphStar
      \xxxParagraphNoStar
  }
  \newcommand{\xxxParagraphStar}[1]{\oldparagraph*{#1}\mbox{}}
  \newcommand{\xxxParagraphNoStar}[1]{\oldparagraph{#1}\mbox{}}
  \let\oldsubparagraph\subparagraph
  \renewcommand{\subparagraph}{
    \@ifstar
      \xxxSubParagraphStar
      \xxxSubParagraphNoStar
  }
  \newcommand{\xxxSubParagraphStar}[1]{\oldsubparagraph*{#1}\mbox{}}
  \newcommand{\xxxSubParagraphNoStar}[1]{\oldsubparagraph{#1}\mbox{}}
\patchcmd\longtable{\par}{\if@noskipsec\mbox{}\fi\par}{}{}
\def\maxwidth{\ifdim\Gin@nat@width>\linewidth\linewidth\else\Gin@nat@width\fi}
\def\maxheight{\ifdim\Gin@nat@height>\textheight\textheight\else\Gin@nat@height\fi}
\def\fps@figure{htbp}
  \renewcommand*\contentsname{Table of contents}
  \newcommand\contentsname{Table of contents}
  \renewcommand*\listfigurename{List of Figures}
  \newcommand\listfigurename{List of Figures}
  \renewcommand*\listtablename{List of Tables}
  \newcommand\listtablename{List of Tables}
  \renewcommand*\figurename{Figure}
  \newcommand\figurename{Figure}
  \renewcommand*\tablename{Table}
  \newcommand\tablename{Table}
\newcommand{\anon}{1}
\begin{document}

\def\spacingset#1{\renewcommand{\baselinestretch}%
{#1}\small\normalsize} \spacingset{1}


\if1\anon
{
  \title{\bf Constrained Gaussian Random Fields with Continuous Linear Boundary Restrictions for Physics-informed Modeling of States}
  \author{Yue Ma\thanks{The authors would like to acknowledge support from the U.S. National Science Foundation Engineering Research Center for Hybrid Autonomous Manufacturing Moving from Evolution to Revolution (ERC‐HAMMER) under Award Number EEC-2133630.}
    \hspace{.2cm}\\
    \normalsize Department of Statistics, The Ohio State University\\
    Oksana A. Chkrebtii \\
    \normalsize Department of Statistics, The Ohio State University\\
    Stephen R. Niezgoda \\
    \normalsize Department of Materials Science and Engineering, The Ohio State University}
  \maketitle
} \fi

\if0\anon
{
  \title{\bf Constrained Gaussian Random Fields with Continuous Linear Boundary Restrictions for Physics-informed Modeling of States}
  \maketitle
} \fi

\bigskip
\begin{abstract}
Boundary constraints in physical, environmental and engineering models restrict smooth states such as temperature to follow known physical laws at the edges of their spatio-temporal domain. Examples include fixed-state or fixed-derivative (insulated) boundary conditions, and constraints that relate the state and the derivatives, such as in models of heat transfer. Despite their flexibility as prior models over system states, Gaussian random fields do not in general enable exact enforcement of such constraints. This work develops a new general framework for constructing linearly boundary-constrained Gaussian random fields from unconstrained Gaussian random fields over multi-dimensional, convex domains. This new class of models provides flexible priors for modeling smooth states with known physical mechanisms acting at the domain boundaries. Simulation studies illustrate how such physics-informed probability models yield improved predictive performance and more realistic uncertainty quantification in applications including probabilistic numerics, data-driven discovery of dynamical systems, and boundary-constrained state estimation, as compared to unconstrained alternatives.
\end{abstract}

\noindent%
{\it Keywords:} Gaussian processes; spatial constraints; surrogate models; partial differential equations; probabilistic numerics
\vfill

\spacingset{1.8} 

\section{Introduction}\label{sec-intro}

In the physical, environmental, and engineering sciences, where modeling smooth states is of interest, there is often complete information about the states at the domain boundaries. For example, when modeling temperature, insulation at the boundary leads to fixed-derivative constraints, while heat transfer results in constraints on linear combinations of the state and its derivatives \citep[e.g.,][]{Hahn2012HeatBC, Bird2006ChemistryBC}. Another example from materials science consists of modeling the displacement field along a material sample to study its response to pulling forces. A typical experimental set-up is illustrated in the left-hand panel of Figure \ref{fig:intro1} and shows that a statistical model should account for the known uniform motion at the upper boundary and zero displacement at the lower boundary of the material sample \citep[e.g.,][]{Hosford2005TensileBC, Arfken2011MathPhysicsBC}. Beyond physical restrictions, in the setting of computer model calibration, a fixed-state constraint over the response surface results when a complex mathematical model has an analytical solution for specific parameter combinations \citep{Vernon2019JUQ, Ye2022MultiFidelity}. Consequently, a prior model that enforces these constraints results in a more accurate surrogate model. Another important application is in constructing scalable approaches for fitting nonstationary models to large datasets, where some approaches \citep[e.g.,][]{Gramacy2015LocalGP,Huang2025StickGP} fit local models independently over disjoint domain partitions, necessitating fixed-state constraints at partition boundaries to ensure continuity of the underlying smooth states. In all these settings, flexible boundary-constrained priors are needed for improved predictive performance and uncertainty quantification given limited measurements.
\begin{figure}[t]
    \centering
    \fbox{\includegraphics[width=0.23\linewidth]{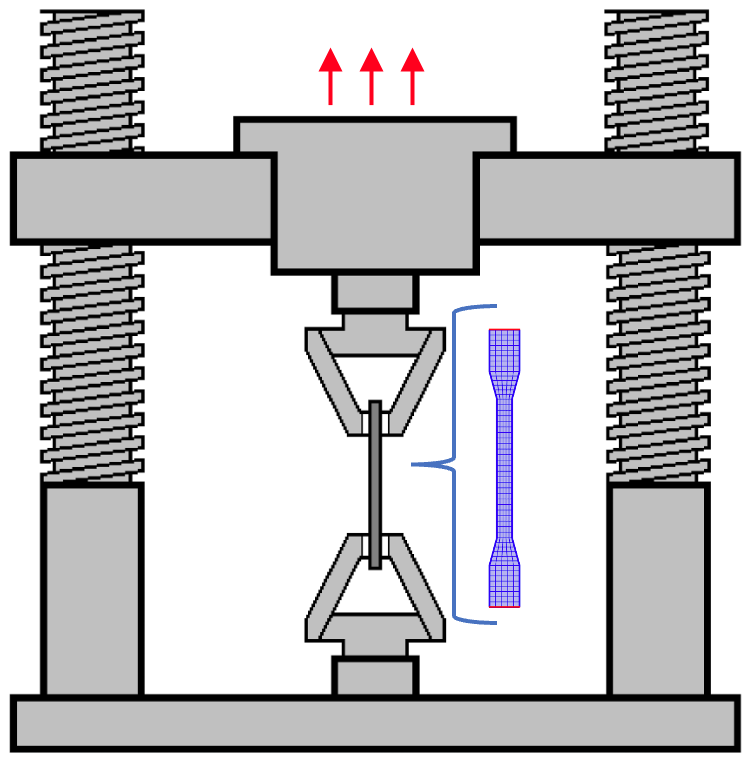}
    \includegraphics[width=0.063\linewidth]{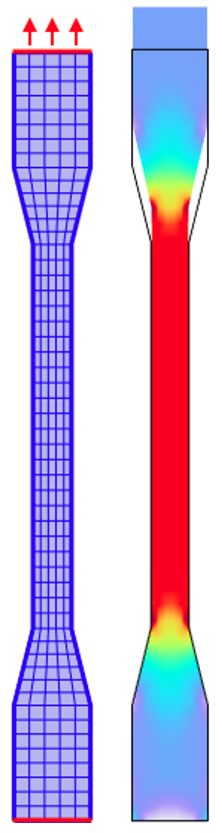}}\hspace{-0.04in}
    \fbox{\includegraphics[width=0.36\linewidth]{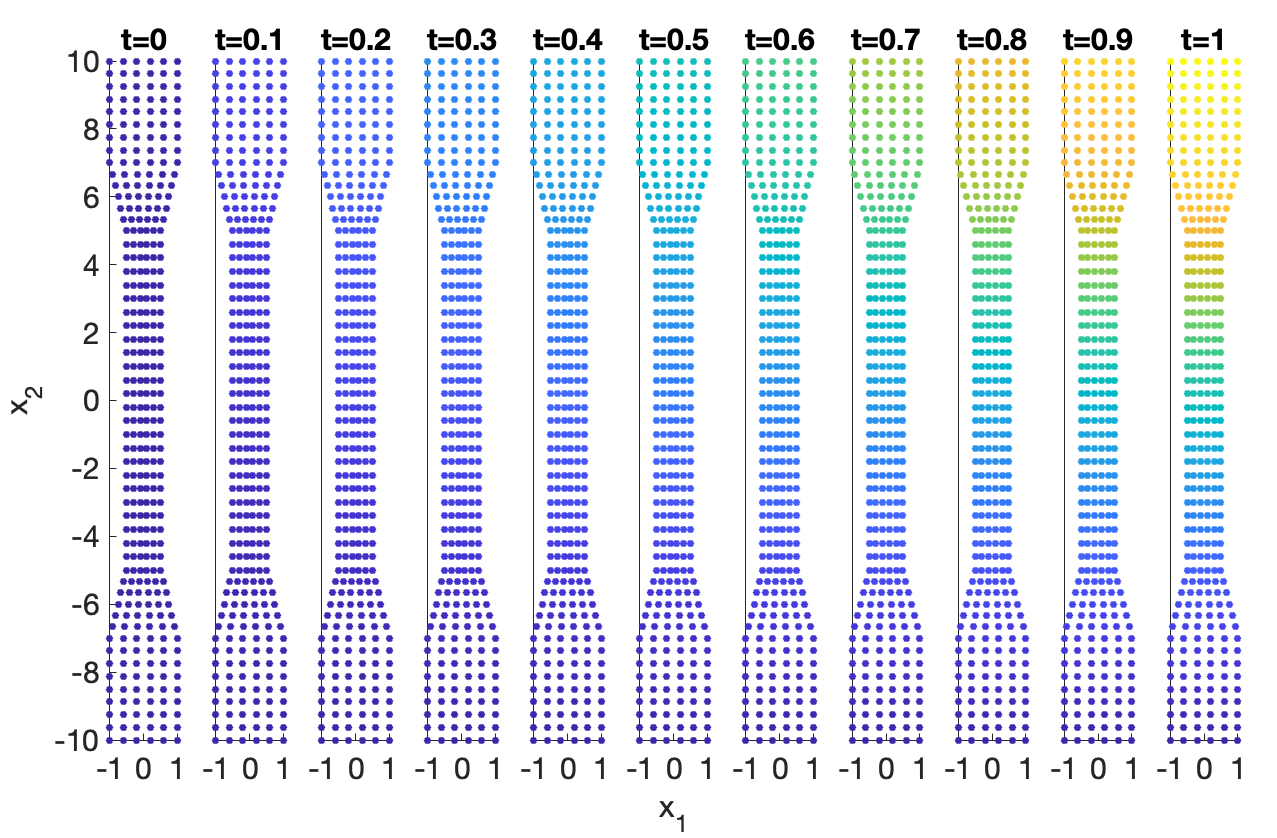}}\hspace{-0.04in}
    \fbox{\includegraphics[width=0.206\linewidth]{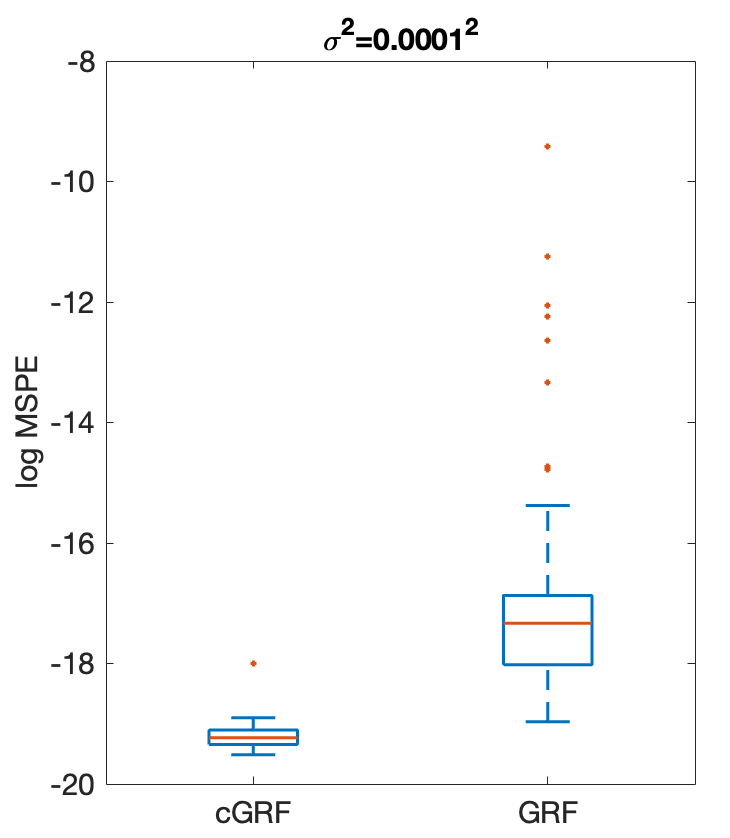}
    \includegraphics[width=0.057\linewidth]{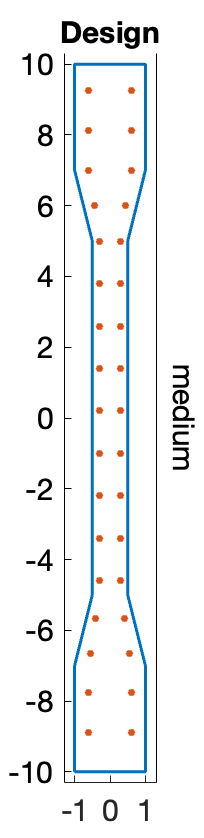}}
    \caption{
    Left panel: illustration of tensile strength test and corresponding constraints at two parallel boundaries (red horizontal lines).
    Center panel: displacement over time from left to right; lighter colors represent greater displacement. 
    Right panel: box plots comparing $logMSPE$ under a constrained vs unconstrained GRF prior on the displacement field.}
    \label{fig:intro1}
\end{figure}

Existing approaches for constructing Gaussian random fields (GRFs) with boundary constraints are limited to specific covariance structures, hyperparameter choices, and constraint types on $1$-dimensional intervals or $2$-dimensional rectangular domains \citep{Ding2019BdryGPAN, Chkrebtii2013Thesis, Gasbarra2007GB, Ye2022MultiFidelity, Tan2018BI, Dalton2024BoundaryConstrainedGP, Vernon2019JUQ}. When extending beyond these cases, a common strategy is to rely on approximations, giving up either deriving analytical expressions for mean and covariance functions, or enforcing continuous constraints at multi-dimensional boundaries \citep{Solin2019Know, Wang2016Shape}. However, as will be illustrated in Section \ref{sec-app}, it is often desirable to maintain the flexibility of constrained GRF priors without relying on approximations. 
We propose a new framework to construct GRFs with continuous linear boundary restrictions over multi-dimensional, convex domains (as illustrated in Figure \ref{fig:intro2}) by transforming appropriately smooth unconstrained GRFs. Expressions for the mean and covariance function of these restricted models are available in closed form, facilitating computation. 

\begin{figure}[t]
    \centering
    \fbox{\includegraphics[width=4.55in,height=\textheight]{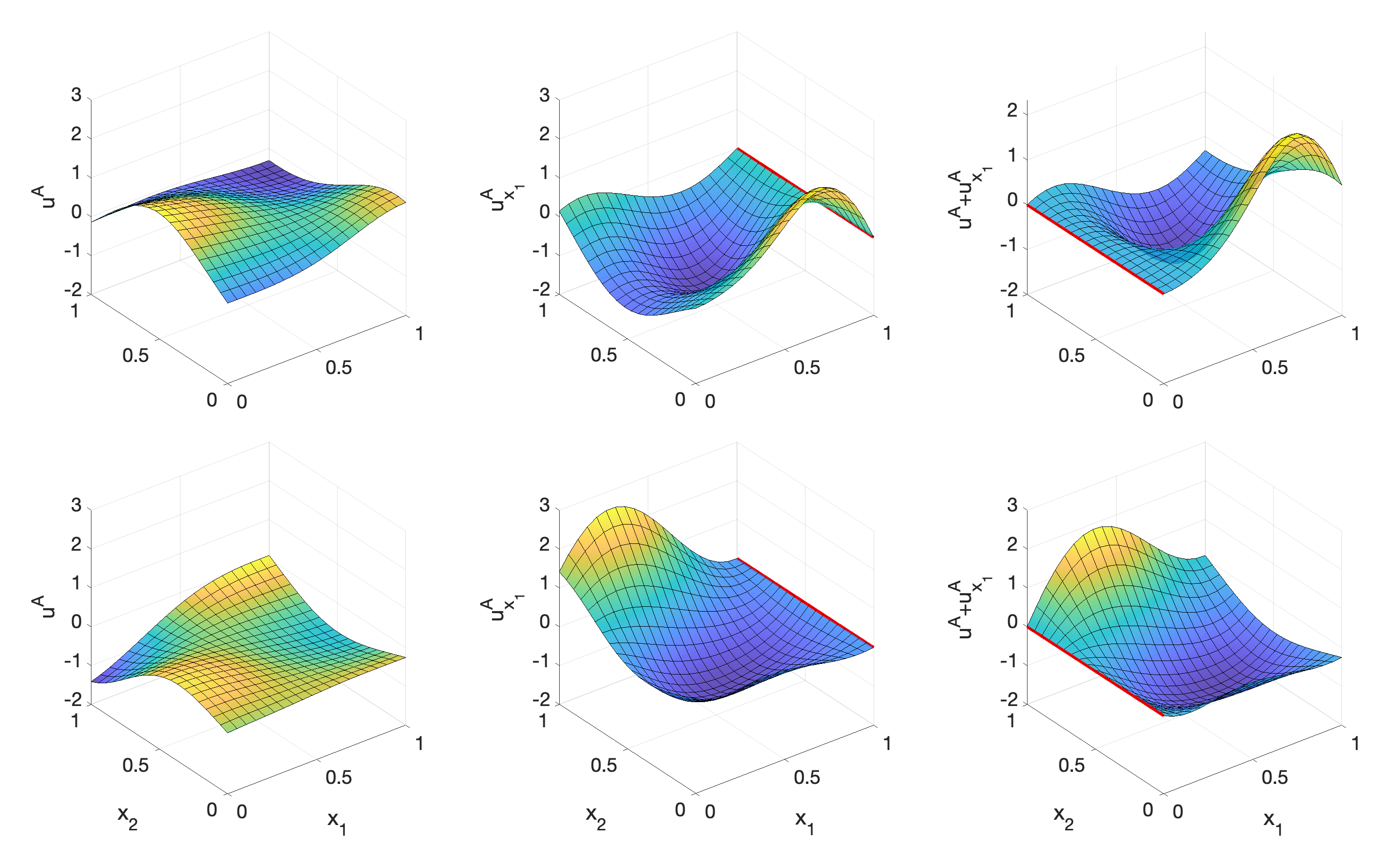}}\hspace{-0.04in}
    \fbox{\includegraphics[width=1.52in,height=\textheight]{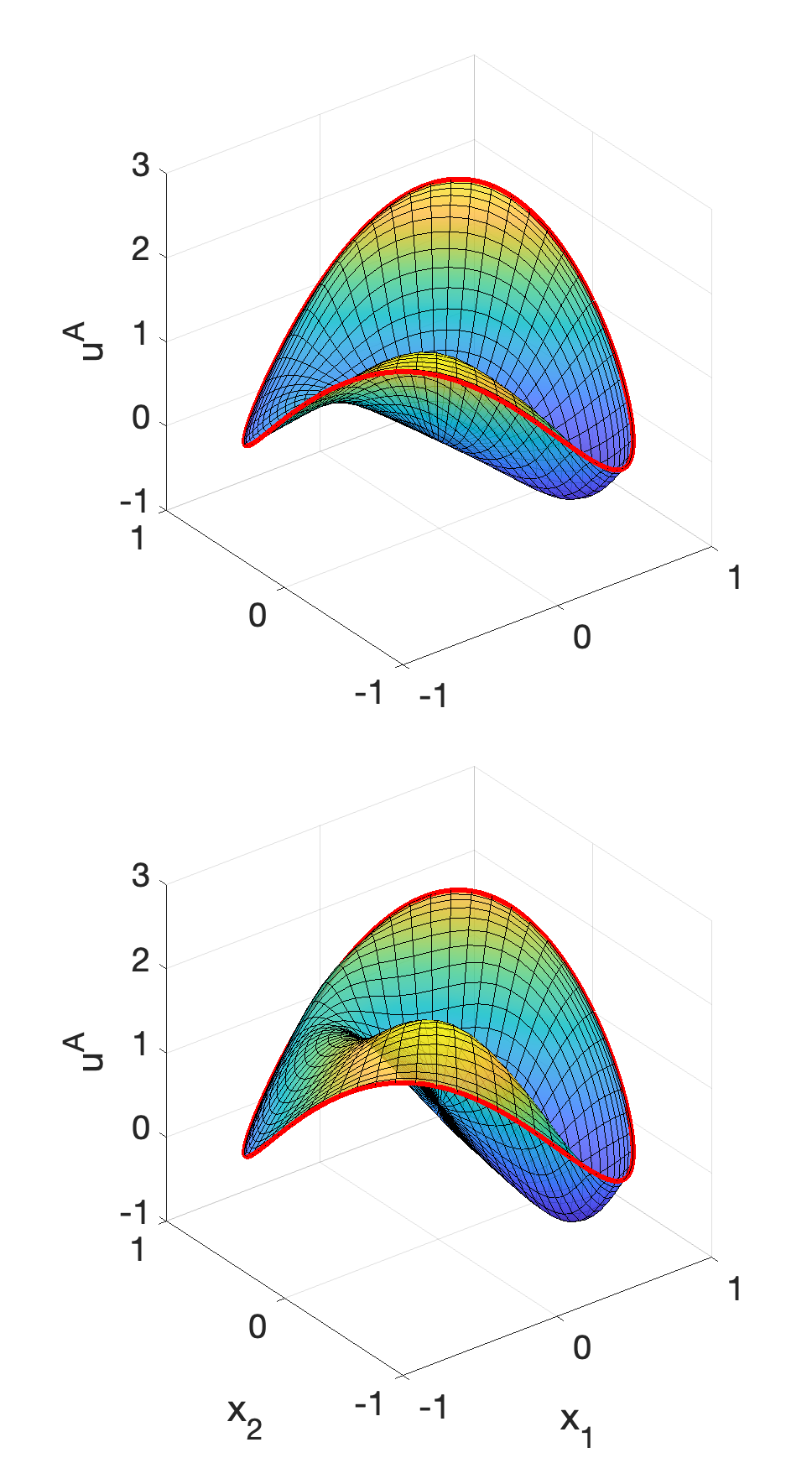}}
    \caption{
    Pairs of realizations (top and bottom rows) from the marginal cGRF model with boundary constraints shown in red.  Left panel from left to right: state, first derivative, and linear combination of state and derivative over a rectangular domain. Right panel: state over a disk-shaped domain.}
    \label{fig:intro2}
\end{figure}

\subsection{Relation to existing work}\label{sec-litrew}

Recent work on enforcing boundary constraints takes a spectral approach by relating the GRF covariance operator to the elliptic operators. \citet{Solin2019Know} first introduced a new covariance structure $k^{\partial \mathcal{D}}(\boldsymbol{x},\boldsymbol{x}')= \sum_j s\left(\lambda_j^{-1/2}\right)\psi_j(\boldsymbol{x})\psi_j(\boldsymbol{x}')$, where $s$ is the spectral density of a stationary, isotropic covariance function, and $\lambda_j$ and $\psi_j$ for $j=1,2,...$ are eigenvalues and eigenfunctions of the Laplace operator $\Delta$, respectively. For a domain $\mathcal{D}$ with boundary $\partial \mathcal{D}$, $\lambda_j$ and $\psi_j$ are obtained by solving the eigenvalue problem $-\Delta\psi_j(\boldsymbol{x})=\lambda_j\psi_j(\boldsymbol{x})$ for $\boldsymbol{x}\in \mathcal{D}$, and $\psi_j(\boldsymbol{x})=0$ for $\boldsymbol{x}\in \partial \mathcal{D}$. Since $\psi_j$ approach $0$ as $\boldsymbol{x}$ approaches $\partial \mathcal{D}$, the new class of covariance functions $k^{\partial \mathcal{D}}$ define GRFs with fixed-state boundary constraints. 
However, as pointed out by \citet{Solin2019Know} and \citet{Gulian2022BVP}, except for fixed-state boundary constraints on rectangular or spherical domains, or certain types of mixed boundary constraints on rectangular domains, the eigenvalues and eigenfunctions typically need to be approximated via discretization.
Although appropriate in their specific context, the approximation means that boundary enforcement for the desired domains is no longer exact, resulting in less realistic modeling. Furthermore, the spectral approach is incompatible with nonstationary, anisotropic base covariance functions, and constraints on segments of the boundary. See \citet{Padilla2025PhysicsInformedGP} for another spectral construction that applies a numerical method for approximating constrained covariances.

Another existing approach is to directly derive analytical expressions for GRF covariance functions to attain fixed-endpoint constraints on a $1$-dimensional domain at one or both endpoints by repeatedly integrating an initial stationary covariance. Such expressions, derived from the squared exponential and uniform covariances, and the Mat\'ern covariance were obtained in \citet{Chkrebtii2013Thesis} and \citet{Ye2022MultiFidelity}, respectively. In addition to being restricted to cases where such analytical integrals are available, extensions to multi-dimensional domains are restricted to a product-form covariance specification, making the approach limited in practice. Adopting a different strategy, \citet{Ding2019BdryGPAN} provide a closed-form expression for a state-constrained covariance derived from a Mat\'ern covariance with smoothness hyperparameter $0.5$ called \textit{BdryMat\'ern}.

Another direct approach to covariance specification, introduced in \cite{Tan2018BI}, formulated the covariance as $k^{\partial \mathcal{D}}(\boldsymbol{x},\boldsymbol{x}')=\phi(\boldsymbol{x})\phi(\boldsymbol{x}')\rho(\boldsymbol{x},\boldsymbol{x}')$, using a correlation function $\rho$, and an approximate distance function (ADF) $\phi$ that approaches $0$ as $\boldsymbol{x}$ approaches the boundary, to enforce fixed-state boundary constraints. \citet{Dalton2024BoundaryConstrainedGP} extended this approach and provided recipes for enforcing certain types of mixed boundary constraints 
by further restricting the ADF, referred to as normalization. They draw connections between directly specifying the covariances and transforming existing GRFs on a case-by-case basis. In practice, curved domain boundaries are approximated using a piecewise linear function, and each piece is assigned an individual ADF, all of which are joined into a global ADF. 
Besides introducing approximation, as noted by \cite{Dalton2024BoundaryConstrainedGP}, the normalization of the global ADF, a necessary condition for enforcing fixed-derivative constraints, fails at the segment joints. Furthermore, from a modeling perspective, it is important to directly address smoothness.

\citet{Gasbarra2007GB} generalize Brownian bridges to \textit{Gaussian bridges}, illustrated in the top left panel of Figure \ref{fig:condvsour}. 
These are Gaussian processes on $1$-dimensional intervals that are restricted at both endpoints through conditioning on the boundary values. \citet{Gasbarra2007GB} show that this conditioned measure is well defined. However, this approach does not straightforwardly generalize to higher-dimensional domains. Indeed, a related proposal by \citet{Vernon2019JUQ} for imposing continuous constraints across edges of a rectangular domain by conditioning on a restriction over an uncountable set lacks an analogous justification. 
Alternatively, \citet{Wang2016Shape} suggest conditioning a GRF on a large (but finite) set of synthetic error-free observations to enforce constraints across segments of spatial domains. However, although GRFs can be quite flexible in interpolating those observations, the required covariance matrix inversion quickly becomes computationally prohibitive. More importantly, this method cannot exactly enforce continuous, infinite-dimensional boundary constraints, as illustrated in the top row of Figure \ref{fig:condvsour}. 

In contrast to existing methods that either define constrained covariance structures in limited settings or rely on approximations of domains or constraints, our approach directly transforms an existing GRF into one that exactly satisfies the desired boundary constraints. We explicitly account for smoothness and provide sufficient conditions on the transformation to enforce linear boundary constraints on multi-dimensional convex domains.

\begin{figure}[t]
    \centering
    \fbox{\includegraphics[width=0.25\linewidth]{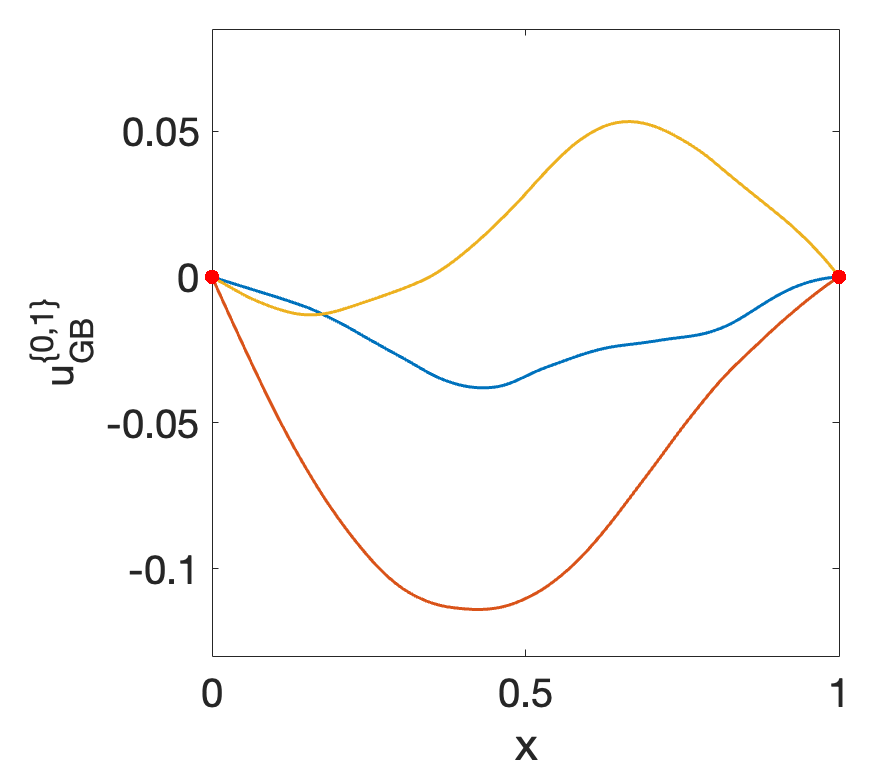}}
    \fbox{\includegraphics[width=0.68\linewidth]{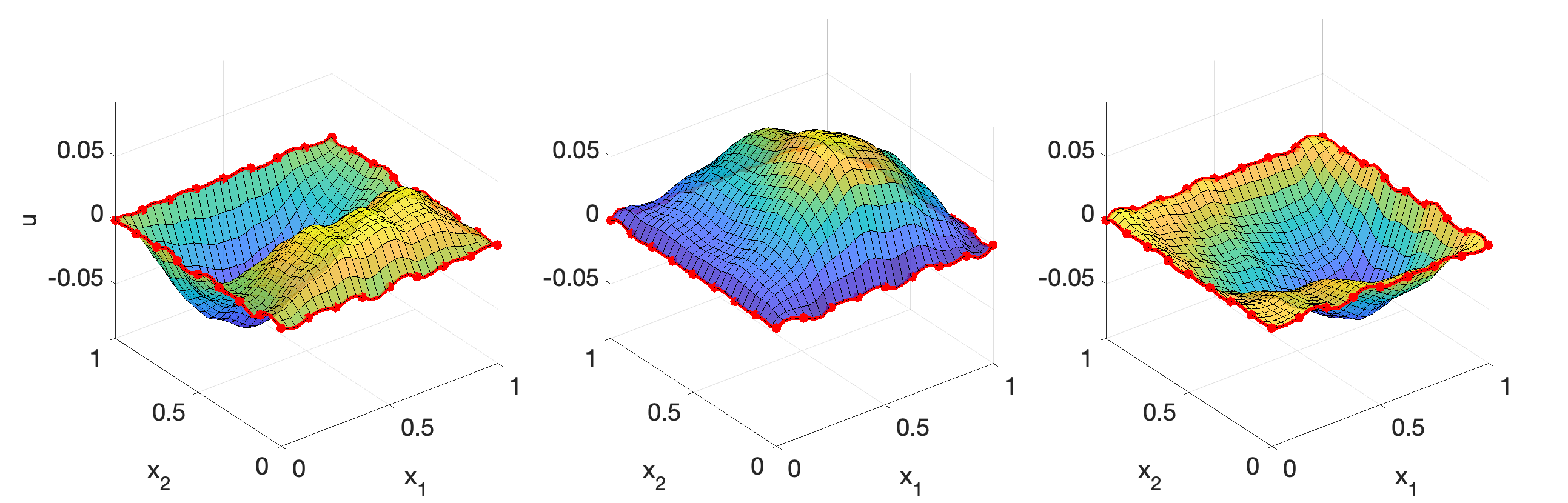}}
    \fbox{\includegraphics[width=0.25\linewidth]{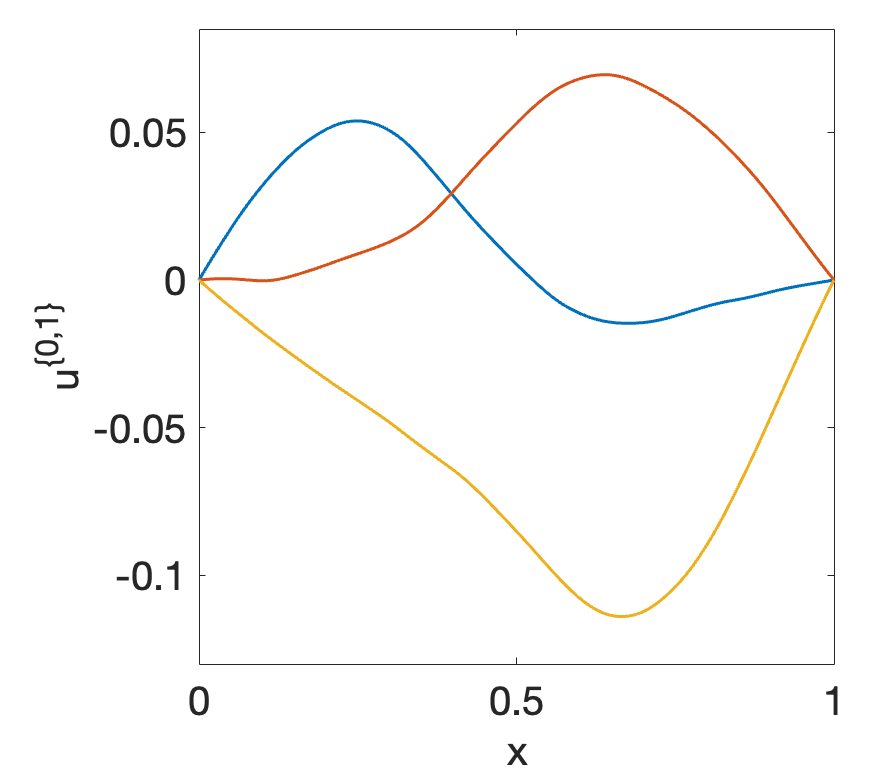}}
    \fbox{\includegraphics[width=0.68\linewidth]{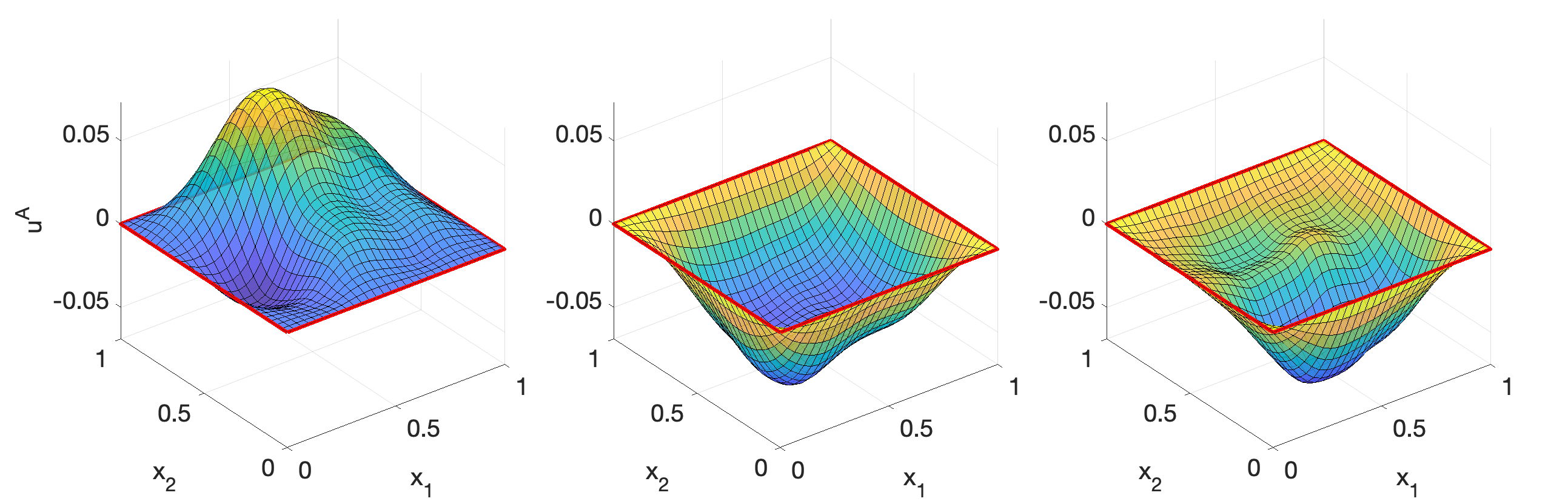}}
    \caption{Top left panel: three samples from a Gaussian bridge from $(0,0)$ to $(1,0)$; Top right panel: three samples from a point-wise conditioned GRF; Bottom panels: three samples from cGRFs on $1$- and $2$-dimensional spatial domains with continuously enforced boundary constraints. Red lines and points represent the state and synthetic observations at the boundary, respectively.}
    \label{fig:condvsour}
\end{figure}

\subsection{Contributions}

Our approach for constructing linearly boundary-constrained GRFs is fundamentally different from the existing methods summarized in Section \ref{sec-litrew}, although some of their resulting covariance structures naturally arise from our approach, as shown in the supplement.  
The first contribution in Section \ref{sec:method} is a definition of a constrained random field satisfying continuous linear boundary restrictions, which we refer to interchangeably as a \emph{boundary-constrained} or \emph{linearly-constrained} random field. This definition leads to our main contribution, which is a representation-based framework to construct constrained GRFs by transforming suitably smooth unconstrained base GRFs. We call this class of models \textit{cGRF} and our approach the \textit{cGRF framework}. Our constructive approach for defining cGRFs retains some properties of the base GRF, such as smoothness, while enforcing constraints at the boundaries. Crucially, constructing cGRFs does not require approximation, and the corresponding mean and covariance functions can be obtained in closed form. 
The framework is compatible with a wide range of linear boundary constraints and convex regular domains, resulting in flexible functional priors for a variety of applications. As examples, we employ cGRFs as prior models of system states on problems in probabilistic numerics \citep{Cockayne2019ReviewProbN, Chkrebtii2016ProbSolver} and data-driven discovery of dynamical systems \citep{North2023ReviewDiscPDE, North2025DiscoveryPDE, Chen2020Lulu}. In these cases, enforcing known constraints on the system states leads to more realistic models, which translates into improved estimation accuracy. Lastly, we apply our approach to the problem of boundary-constrained inference on the displacement field over a material sample from discrete tensile test measurements, which is an important step for ensuring that materials used in industrial application meet safety and performance standards.

\section{Background}

We briefly review Gaussian random field (GRF) models and their use as priors over unknown states for Bayesian function estimation. This is followed by a review of a class of GRFs on a $1$-dimensional domain called  \emph{Gaussian bridges} \citep{Gasbarra2007GB}, which impose fixed-state constraints at the endpoints. We note that our proposed framework is both fundamentally different and more general than the Gaussian bridge.

GRFs are probability models of smooth states over a spatio-temporal domain $\mathcal{D}$, which are fully specified by their mean  $m(\boldsymbol{x}) = \mathbb{E}[u(\boldsymbol{x})]$ and covariance function $k(\boldsymbol{x}, \boldsymbol{x'}) = \mathbb{E}[(u(\boldsymbol{x})-m(\boldsymbol{x}))(u(\boldsymbol{x'})-m(\boldsymbol{x'}))]$ for $\boldsymbol{x},\boldsymbol{x}' \in \mathcal{D}$. We say that $\{u(\boldsymbol{x}),\boldsymbol{x}\in\mathcal{D}\}$ is a GRF, denoted by $u(\boldsymbol{x}) \sim GRF(m(\boldsymbol{x}), k(\boldsymbol{x}, \boldsymbol{x'}))$, if any finite collection $\boldsymbol{u} = (u(\boldsymbol{x}_1),...,u(\boldsymbol{x}_n))$ follows a multivariate normal distribution $\mathcal{N}(\boldsymbol{\mu},K)$ with the mean vector and covariance matrix having entries $\boldsymbol\mu_i=m(\boldsymbol{x}_i)$ and $K_{ij}=k(\boldsymbol{x}_i,\boldsymbol{x}_j)$, respectively, for $i,j=1,...,n$. 
The covariance function controls the degree of dependence in the state at nearby locations on the domain. An example of a flexible stationary covariance structure is the Mat\'ern function  
\begin{equation*}
    k(\boldsymbol{x}, \boldsymbol{x}')=\alpha^{-1}\frac{1}{\Gamma(\nu)2^{\nu-1}}\left(\frac{\sqrt{2\nu}\,||\boldsymbol{x}-\boldsymbol{x}'||}{\lambda}\right)^{\nu}K_{\nu}\left(\frac{\sqrt{2\nu}\,||\boldsymbol{x}-\boldsymbol{x}'||}{\lambda}\right), \quad \alpha, \lambda, \nu>0, 
\end{equation*}
where $K_{\nu}$ is the modified Bessel function of the second kind. A special case called the squared exponential covariance is obtained in the limit as $\nu \to \infty$. GRFs with squared exponential covariance are infinitely sample-path differentiable, and GRFs with Mat\'ern covariance are $\lceil \nu \rceil-1$ times sample-path differentiable \citep{Paciorek2003Nonstationary, Wang2021NonlinearPDE}. 
The hyperparameters $\alpha, \lambda$, and $\nu$ determine the prior variance, length-scale, and degree of smoothness, respectively.

An important property of GRFs which enables our framework is that they retain Gaussianity under linear operations $\mathcal{L}:\mathcal{H}\to\mathcal{H}'$, including integration, differentiation, and composition. Indeed, if $u$ is a GRF with mean  $m$ and covariance $k$ such that $\mathcal{L}u$ exists, then $\mathcal{L}u$ is also a GRF with mean $\mathcal{L}m$ and covariance $\mathcal{L}k\mathcal{L}^*$, where  $\mathcal{L}^*$ denotes the adjoint of $\mathcal{L}$. For example, if $\mathcal{L}$ is the differential operator, and $m$ and $k$ are such that $u$ is sample-path differentiable, then the GRF $\mathcal{L}u$ is well defined. As another example, the composition of a GRF with a deterministic function 
$f:\mathcal{D}\to\mathcal{D}$ 
remains a GRF.

Pointwise boundary enforcement is possible by conditioning a prior GRF measure on error-free measurements $\boldsymbol{y} = \mathcal{L}\boldsymbol{u}$ at finitely many boundary points $\boldsymbol{x}\in \partial\mathcal{D}$. If $u$ is a GRF prior with mean $m_0$ and covariance $k_0$, the posterior measure is likewise a GRF with mean and covariance,
\begin{equation}
    \label{naivecond}
    \begin{split}
    m(\boldsymbol{x}) &= m_0(\boldsymbol{x})+k_0\mathcal{L}^*(\boldsymbol{x},X)\,\left[\mathcal{L} \, k_0\mathcal{L}^*(X,X)\right]^{-1}\left(\boldsymbol{y}-\mathcal{L}m_0(X)\right), \\
    k(\boldsymbol{x}, \boldsymbol{x}') &= k_0(\boldsymbol{x}, \boldsymbol{x}') - k_0\mathcal{L}^*(\boldsymbol{x},X)\,\left[\mathcal{L} \, k_0\mathcal{L}^*(X,X)\right]^{-1} \mathcal{L} \, k_0(X, \boldsymbol{x}').
    \end{split}
\end{equation}
Here $k(X,X')$ for a covariance $k$ denotes an $n\times m$ matrix with $(i,j)$th element $k(\boldsymbol{x}_i,\boldsymbol{x}_j)$ where $X = (\boldsymbol{x}_1,...,\boldsymbol{x}_n)$ and $X' = (\boldsymbol{x}'_1,...,\boldsymbol{x}'_m)$. Similar updates are available for conditioning on measurements of the state across the domain with additive Gaussian error. These closed-form updates make GRFs natural candidates for active learning algorithms for sequential experimental design \citep{Santner2003CompExp, Chen2020Lulu} and smoothing problems in spatial statistics \citep{Stein1999Interpolation, Banerjee2004Spatial, Schabenberger2017Spatial}. 

The \textit{Gaussian bridge} \citep{Gasbarra2007GB} is one example on a $1$-dimensional domain, where a Gaussian process (GP) is conditioned on two synthetic measurements at the endpoints, as illustrated in the top left panel of Figure \ref{fig:condvsour}. The Gaussian bridge defines constrained GPs through a representation, from which corresponding mean and covariance functions may be computed (we will show that the Gaussian bridge representation arises as a special case of the cGRF for a 1-dimensional domain). Consider the interval $[0,T]$, two scalars $\xi$ and $\theta$, and a Gaussian process $u^{\{0\}}_{GB}$ with $u^{\{0\}}_{GB}(0)=\xi$. \citet{Gasbarra2007GB} provides the following proposition. 
Let $u^{\{0\}}_{GB}$ be a Gaussian process with mean function $m^{\{0\}}_{GB}$ and covariance function $k^{\{0\}}_{GB}$. Then the Gaussian bridge $u^{\{0,T\}}_{GB}$ from $(0,\xi)$ to $(T,\theta)$ admits a representation
$u^{\{0,T\}}_{GB}(x) = u^{\{0\}}_{GB}(x)+ k^{\{0\}}_{GB}(T,x)\, k^{\{0\}}_{GB}(T,T)^{-1} \left(\theta-u^{\{0\}}_{GB}(T)\right)$ 
with mean and covariance functions
\begin{align*}
    \begin{split}
        m^{\{0,T\}}_{GB}(x)&=m^{\{0\}}_{GB}(x)+k^{\{0\}}_{GB}(T,x)\,k^{\{0\}}_{GB}(T,T)^{-1}\left(\theta-m^{\{0\}}_{GB}(T)\right), \\
        k^{\{0,T\}}_{GB}(x,x') &= k^{\{0\}}_{GB}(x,x')-k^{\{0\}}_{GB}(T,x)\,k^{\{0\}}_{GB}(T,T)^{-1}k^{\{0\}}_{GB}(T,x').
    \end{split}
\end{align*}
Generalizations to higher-dimensional domains are not available, as they require proving that measures conditioned on states over a continuous boundary are well defined.
We instead adopt a constructive approach, based on linear transformation of a suitably smooth base GRF, which constrains the states without explicitly conditioning.

\section{Methodology}
\label{sec:method}

We begin by defining what we mean for a random field to be \textit{linearly-constrained} over a segment of the domain boundary. Throughout the presentation, we define random fields on a probability space $(\Omega, \mathcal{F}, P)$ over a spatial domain $\mathcal{D}\subset\mathbb{R}^d$ with boundary $\partial \mathcal{D}$, taking values in a Hilbert space $\mathcal{H}$. Moreover, we write that two random fields equal to each other when they equal in law. When we say that a random field is sample-path continuous or sample-path differentiable, we mean that there exists a modification that is sample-path continuous or sample-path differentiable \citep[see, for e.g.,][]{Seeger2004GPML, Paciorek2003Nonstationary, Wang2021NonlinearPDE}.

\begin{definition}
    \label{def:constrGRF}
    Let $\mathcal{L}:\mathcal{H}\to \mathcal{H}'$ be a linear operator, $g:\mathcal{D}\to \mathbb{R}$ be a \textit{target} function, and $A\subset \partial\mathcal{D}$ be a segment of the boundary.  
    We say that a random field $u^A: \mathcal{D}\times \Omega \to \mathbb{R}$ is linearly-constrained if $\mathcal{L}u^A$ is sample-path continuous and $\mathcal{L}u^A(\boldsymbol{x})$ equals $g(\boldsymbol{x})$ almost surely for all $\boldsymbol{x}\in A$.
\end{definition}

While Definition \ref{def:constrGRF} includes any sample-path continuous random field that satisfies the boundary constraint with probability one, this paper focuses on the Gaussian case in particular. We construct a linearly-constrained Gaussian random field, termed \emph{cGRF}, as
\begin{equation}
    \label{eqn:cGRF}
    u^A \sim GRF(m^A,k^A) := cGRF(\mathcal{L}, g, A, m_0, k_0),
\end{equation}
by transforming a base unconstrained GRF $u\sim GRF(m_0,k_0)$ into a GRF with mean $m^A$ and covariance $k^A$. 

In the remainder of Section \ref{sec:method}, we develop representations of cGRF models of states under various linear boundary constraints on multi-dimensional convex domains. Section \ref{sec:3.1} introduces the \textit{projection} and \textit{weight} functions used to construct cGRFs. Section \ref{sec:3.2} presents our framework for constructing cGRF representations along with closed-form expressions for their  mean and covariance functions. Section \ref{sec:3.3} provides a recipe for constructing weight functions.

\subsection{Preliminaries for constructing cGRFs}
\label{sec:3.1}

In this section, we discuss different types of linear boundary constraints and introduce the projection and weight functions used in the representations in Section \ref{sec:3.2}. For simplicity in this section we take $g=0$,  and consider more general choices in Section \ref{sec:3.2}. To build an intuitive understanding of the approach, we begin by constructing a cGRF on a $1$-dimensional interval with a single endpoint constraint.

\begin{example}
    \label{ex:u0}
    We wish to construct a GRF constrained to $g=0$ at the left endpoint $A=\{0\}$ on the $1$-dimensional domain $\mathcal{D}=[0,1]$, denoted by $u^{\{0\}}\sim cGRF(\mathcal{I},0,\{0\},m_0,k_0)$, where $\mathcal{I}$ is the identity operator, and $m_0$ and $k_0$ are the mean and covariance of a base GRF $u$. Consider the transformation $u^{\{0\}}(x)=u(x)-u(0)$ of $u$. It is clear that $u^{\{0\}}(x)=0$ almost surely at $x=0$, and $u^{\{0\}}$ is as smooth as $u$ across the domain. 
    Its mean and covariance functions are $m^{\{0\}}(x)=m_0(x)-m_0(0)$, and $k^{\{0\}}(x,x')=k_0(x,x')-k_0(x,0)-k_0(0,x')+k_0(0,0)$, respectively. The supplement shows that $k^{\{0\}}$ is the covariance derived directly in \citet{Chkrebtii2013Thesis} and \citet{Ye2022MultiFidelity}.
\end{example}

Section \ref{sec:3.2} describes a framework for enforcing more general constraints over multi-dimensional domains. For this, we next introduce the projection and weight functions $f$ and $w$.

\subsubsection{Multi-dimensional convex domains}

Constraints on multi-dimensional domains are typically defined over continuous boundary segments, such as the sides of a rectangular domain or segments of the circular boundary of a disk-shaped domain.  
Recall that $A\subset \partial \mathcal{D}$ denotes the segment or set of segments where one wishes to enforce constraints. Below we use examples to introduce the projection function $f$ and relate it to $A$. Implementation details for cGRFs on the unit disk and unit triangle are provided in the the supplement.

\begin{example}
    \label{ex:Af}
    For the unit square domain $\mathcal{D}=[0,1]^2$, three examples of boundary segments $A$ are the left side $\{(x_1,x_2)\in \partial\mathcal{D} \mid x_1=0\}$, the two parallel sides $\{(x_1,x_2)\in \partial\mathcal{D} \mid x_1=0 \text{ or } x_1=1\}$, and the two adjacent sides $\{(x_1,x_2)\in \partial\mathcal{D} \mid x_1=0 \text{ or } x_2=0\}$, illustrated in red in the first three panels of Figure \ref{fig:Af}.
    For the unit disk domain $\mathcal{D}=\{(x_1,x_2)\in \mathbb{R}^2 \mid x_1^2+x_2^2\leq 1\}$, two examples of boundary segments $A$ are the half circle $\{(x_1,x_2)\in \partial\mathcal{D} \mid x_1 = -(1-x_2^2)^{1/2}\}$, and the full circle 
    $\{(x_1,x_2)\in \partial\mathcal{D}\}$,
    illustrated by the last two panels of Figure \ref{fig:Af}.
\end{example}

For a given boundary segment $A$, the cGRF construction requires a continuous function $f:\mathcal{D}\to A\subset \mathcal{D}$ mapping interior points in $\mathcal{D}$ onto $A$ and mapping points in $A$ onto themselves, i.e., $f(\boldsymbol{x})=\boldsymbol{x}$ when $\boldsymbol{x}\in A$.
Notice that $u^{\{0\}}(x)$ in Example \ref{ex:u0} can be equivalently represented as $u(x)-u\circ f(x)$ with $f(x)=0$, for all $x\in[0,1]$.

\begin{figure}[t]
    \centering
    \includegraphics[width=1\linewidth]{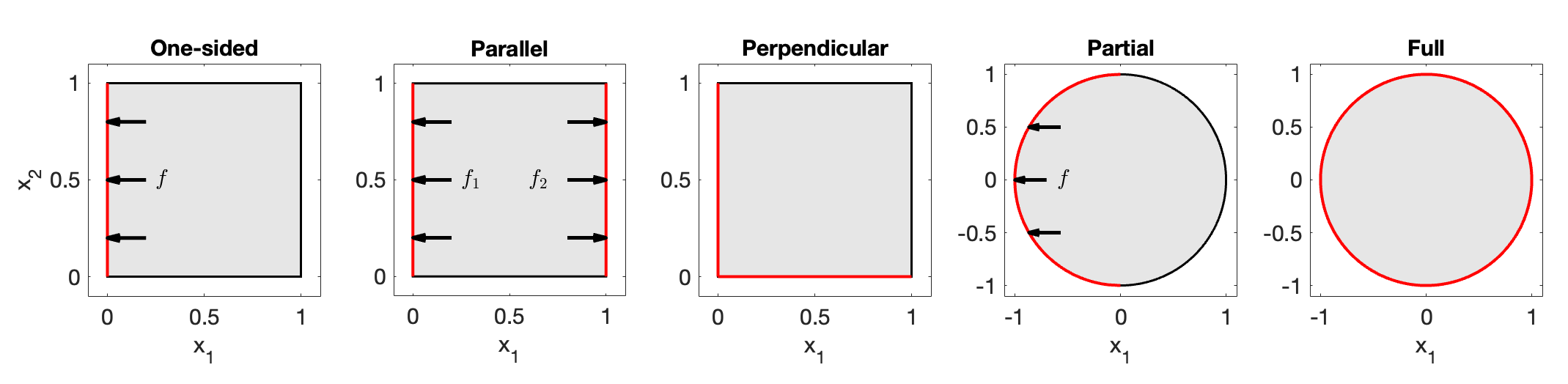}
    \caption{Five examples of boundary segments $A$ (red lines) and the corresponding projection function(s) in three cases (black arrows). }
    \label{fig:Af}
\end{figure}

\noindent\textbf{Example \ref{ex:Af} (Continued).}
\textit{For the unit square domain, we construct a cGRF constrained to $g=0$ at the left side $A=\{(x_1,x_2)\subset \partial\mathcal{D} \mid x_1=0\}$ via the representation $u^{A}(x_1,x_2)=u(x_1,x_2)-u\circ f(x_1,x_2)$ with $f(x_1,x_2) = (0,x_2)$, for all $(x_1,x_2) \in \mathcal{D}$. Similarly, for the unit disk domain, a cGRF constrained to $g=0$ at the half circle $A=\{(x_1,x_2)\subset \partial\mathcal{D} \mid x_1 = -(1-x_2^2)^{1/2}\}$ is obtained as $u^{A}(x_1,x_2)=u(x_1,x_2)-u\circ f(x_1,x_2)$ with $f(x_1,x_2) = (-(1-x_2^2)^{1/2},x_2)$, for all $(x_1,x_2) \in \mathcal{D}$. In both settings, $f$ is the projection of interior points $\boldsymbol{x}$ to the boundary segment $A$, along the direction $(-1,0)$ (first and fourth panels of Figure \ref{fig:Af}).}

\begin{example}
    \label{ex:convexity}
    (Relaxation of convexity) We restrict our discussion to convex domains to avoid violating the continuity of the projection function $f$. Essentially, on a non-convex domain, two spatial coordinates close to each other can be projected to two coordinates far away from each other on the boundary. For instance, the continuous projection function on the disk-shaped domain, illustrated by the fourth panel in Figure \ref{fig:Af}, fails to be continuous for a ring-shaped (annular) domain. This restriction on the domain may be relaxed, either by exploring alternative options for $f$, or if constraints are enforced only on a segment of the boundary. The application on tensile testing in Section \ref{sec-app} demonstrates constraining the state at parallel sides of a dog-bone-shaped domain.
\end{example}

\subsubsection{Constraining linear transformations of the states}
\label{sec:3.1.2}

We introduce the weight function, $w: \mathcal{D} \to \mathbb{R}$, to achieve two other generalizations. The first is to impose a constraint involving a linear transformation $\mathcal{L}$ of the state. In this case, $w$ activates the constraint only when $\mathcal{L}$ is applied. 

\noindent\textbf{Example \ref{ex:u0} (Continued).}
\textit{Consider the first-order differential operator $\mathcal{L}=\partial_x$. Returning to the $1$-dimensional domain, we wish to construct $u^{\{0\}}\sim cGRF(\partial_x,0,\{0\}, m_0,k_0)$ to satisfy the constraint $\partial_xu^{\{0\}}(x)=0$ at $x=0$. One such representation is $u^{\{0\}}(x)=u(x)-w(x)(\partial_x u \circ f(x))$, where $f(x)=0$, and $w(x)=x$ for all $x\in [0,1]$. Notice that $\partial_x u^{\{0\}}(x) = \partial_x u(x)-\partial_x u (0)$, so that $\partial_x u^{\{0\}}$ is constrained to $g=0$ at $A=\{0\}$. The center panel in Figure \ref{fig:1dmixed} shows a cGRF with fixed-derivative boundary constraint at the right endpoint. To simultaneously enforce the constraint at the left endpoint, as shown in the right panel, refer to the discussion in Section \ref{sec:3.3}. The supplement shows that $u^{\{0\}}$ is related to Eq. $(27)$ in \citet{Dalton2024BoundaryConstrainedGP} for a fixed-derivative boundary constraint.}

\begin{figure}[t]
    \centering
    \includegraphics[width=0.9\linewidth]{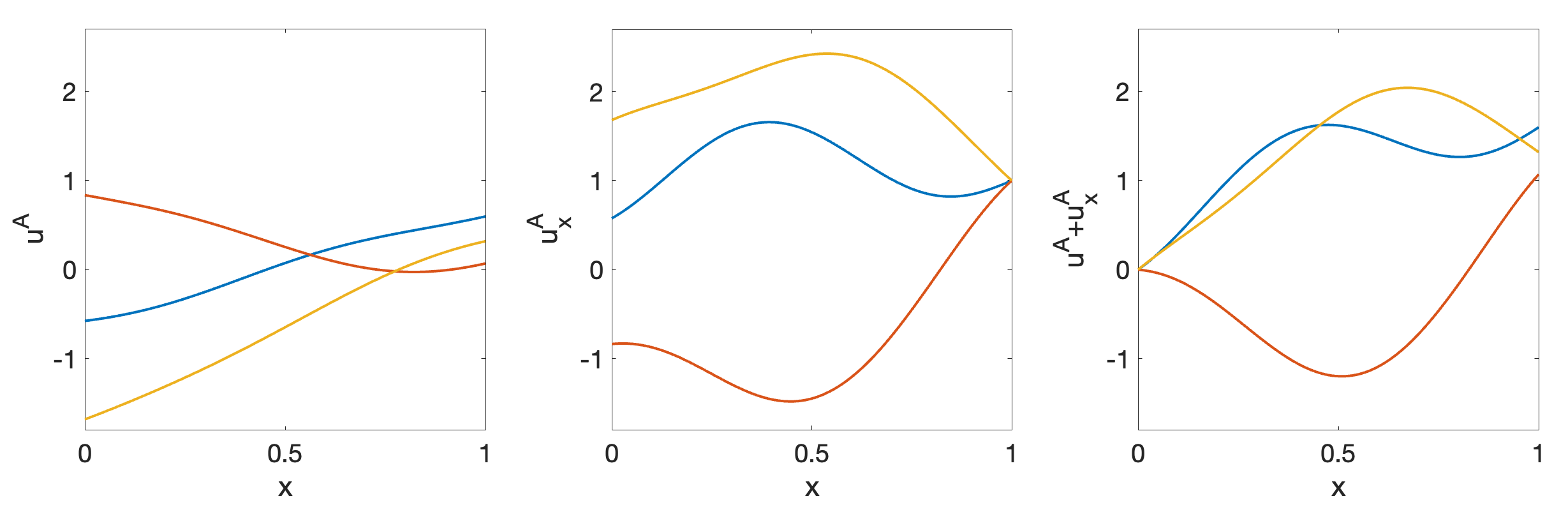}
    \caption{Marginal cGRF samples (blue, yellow, and orange lines) over state $u^A$, first derivative $u^A_x$, and $u^A+u^A_x$ (from left to right). The first derivative is constrained to $1$ at the right endpoint; while $u^A+u^A_x$ is constrained to $0$ at the left endpoint.}
    \label{fig:1dmixed}
\end{figure}

Another generalization consists of imposing different boundary constraints across disjoint boundary segments. Consider a set of $n$ weight functions, $\boldsymbol{w}=\{w_i\mid i=1,...,n\}$, where $n$ is the number of constraints. $\boldsymbol{w}$ activates the corresponding constraints as $\boldsymbol{x}$ approaches different segments of the boundary. See Section \ref{sec:3.3} for a recipe for constructing $\boldsymbol{w}$.

This generalization is also convenient for relaxing the restriction on the projection function $f$ when a single projection for the entire set of boundary segments $A$ is unavailable, by treating the constraint on $A$ as multiple constraints of the same type defined on disjoint segments of $A$.

\noindent\textbf{Example \ref{ex:Af} (Continued).}
\textit{
For the two-sided parallel boundary constraint on the unit square domain, the function $f$ cannot simultaneously map $\boldsymbol{x}\in \mathcal{D}$ to the parallel sides $A$, while ensuring that $f(\boldsymbol{x})=\boldsymbol{x}$ at $A$. Hence, we partition $A$ into $A_{1}=\{(x_1,x_2)\in \partial \mathcal{D}\mid x_1=0\}$ and $A_{2}= \{(x_1,x_2)\in \partial \mathcal{D}\mid x_1=1\}$, and construct $u^{\{A_1,A_2\}}(x_1,x_2)=u(x_1,x_2)-w_{1}(x_1,x_2)(u\circ f_{1}(x_1,x_2)) - w_{2}(x_1,x_2)(u\circ f_{2}(x_1,x_2))$, where $f_{1}(x_1,x_2)=(0,x_2)$, $f_{2}(x_1,x_2)=(1,x_2)$,  $w_{1}(x_1,x_2)=1-x_1$, and $w_{2}(x_1,x_2)=x_1$, for $(x_1,x_2) \in \mathcal{D}$ (see the second plot of Figure \ref{fig:Af} for illustration of $f_1,f_2$). Notice that $\{w_{1},w_{2}\}$ approaches $\{0,1\}$ and $\{1,0\}$, as $x_1$ approaches $0$ and $1$, respectively, and that as a result, $u^{\{A_1,A_2\}}$ is constrained to $0$ across $A$.
}

\subsection{A general cGRF representation}
\label{sec:3.2}

This section formally introduces the cGRF framework. Theorem \ref{thm:constrGRF} provides sufficient conditions on the base GRF $u$, and the custom projection and weight functions introduced in Section \ref{sec:3.1}, for the constrained GRF $u^{\boldsymbol{A}}$ to satisfy $\mathcal{L}_iu^{\boldsymbol{A}}(\boldsymbol{x})=g_i(\boldsymbol{x})$ for all $\boldsymbol{x}\in A_i$, $i=1,...,n$. Explicit expressions for the mean and covariance functions of $u^{\boldsymbol{A}}$ are also provided.

\begin{theorem}

    \label{thm:constrGRF}
    For $i=1,...,n$, let $\mathcal{L}_i:\mathcal{H}\to \mathcal{H}_i' \subset \mathcal{C}(\mathcal{D})$ be a linear operator, let $g_i: \mathcal{D} \to \mathbb{R}$ be a target function, let $A_i \subset \partial\mathcal{D}$ be a boundary segment such that there exists a continuous projection function $f_i:\mathcal{D}\to A_i$ satisfying $f_i(\boldsymbol{x})=\boldsymbol{x}$ for all $\boldsymbol{x}\in A_i$, and let $w_i: \mathcal{D} \to \mathbb{R}$ be a weight function. 
    Let $u:\mathcal{D}\times \Omega \to \mathbb{R}$ be a Gaussian random field such that $u(\cdot ,\omega)\in \mathcal{H}$ for $P$-almost all $\omega\in\Omega$. 
    Furthermore, assume that for $i,j=1,...,n$, 
    \begin{equation}
        \label{eqn:condsmooth}
        w_j(\cdot)\left(g_j\circ f_j(\cdot) - \mathcal{L}_ju\circ f_j(\cdot, \omega)\right) \in \mathcal{H}
    \end{equation}
    for $P$-almost all $\omega\in\Omega$, and for all $\boldsymbol{x}\in A_i$, 
    \begin{equation}
        \label{eqn:cond}
        \mathcal{L}_i \left[w_j(\boldsymbol{x})\left(g_j\circ f_j(\boldsymbol{x})-\mathcal{L}_ju\circ f_j(\boldsymbol{x})\right)\right] = 
        \begin{cases}
            g_j\circ f_j(\boldsymbol{x})-\mathcal{L}_iu\circ f_j(\boldsymbol{x}), & j=i \\
            0, & j\neq i
        \end{cases} 
    \end{equation} 
    then a linearly-constrained Gaussian random field $u^{\boldsymbol{A}}$ can be constructed as
    \begin{equation}
        \label{eqn:rep}
        u^{\boldsymbol{A}}(\boldsymbol{x})=u(\boldsymbol{x})+ \sum_{j=1}^n w_j(\boldsymbol{x})\left(g_j\circ f_j(\boldsymbol{x})-\mathcal{L}_ju\circ f_j(\boldsymbol{x})\right).
    \end{equation} 

\end{theorem}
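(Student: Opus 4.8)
The plan is to verify the two requirements of Definition \ref{def:constrGRF} for the field $u^{\boldsymbol{A}}$ defined by the representation \eqref{eqn:rep}: that it is Gaussian, and that for each $i$ the transformed field $\mathcal{L}_i u^{\boldsymbol{A}}$ is sample-path continuous and equals $g_i$ almost surely on $A_i$. I would dispose of Gaussianity first and quickly. Each summand in \eqref{eqn:rep} is the deterministic shift $w_j(\boldsymbol{x})\,g_j\circ f_j(\boldsymbol{x})$ plus the image of the base field $u$ under the linear map $v \mapsto -w_j(\boldsymbol{x})\,(\mathcal{L}_j v\circ f_j(\boldsymbol{x}))$, which composes the linear operator $\mathcal{L}_j$, precomposition with the deterministic $f_j$, and pointwise multiplication by the deterministic $w_j$. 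Hence $u^{\boldsymbol{A}}$ is an affine transformation of $u$, and by the closure of GRFs under linear operations and composition recalled in the Background, every finite collection $(u^{\boldsymbol{A}}(\boldsymbol{x}_1),\dots,u^{\boldsymbol{A}}(\boldsymbol{x}_m))$ is jointly Gaussian; its mean and covariance then follow by linearity.

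Next I would establish that $\mathcal{L}_i$ may legitimately be applied to $u^{\boldsymbol{A}}$ path-by-path, and that the result is continuous. Since $\mathcal{H}$ is a vector space, the hypothesis $u(\cdot,\omega)\in\mathcal{H}$ together with condition \eqref{eqn:condsmooth}, which places each summand in $\mathcal{H}$ for $P$-almost all $\omega$, gives $u^{\boldsymbol{A}}(\cdot,\omega)\in\mathcal{H}$ almost surely. Applying $\mathcal{L}_i:\mathcal{H}\to\mathcal{H}_i'\subset\mathcal{C}(\mathcal{D})$ then yields $\mathcal{L}_i u^{\boldsymbol{A}}(\cdot,\omega)\in\mathcal{C}(\mathcal{D})$ almost surely, i.e. sample-path continuity of $\mathcal{L}_i u^{\boldsymbol{A}}$ (in the sense of admitting a continuous modification). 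This is exactly the role of \eqref{eqn:condsmooth}: it certifies that the construction does not leave the domain on which $\mathcal{L}_i$ acts and produces continuous paths.

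The core of the argument is the constraint itself. By linearity of $\mathcal{L}_i$ I would distribute it across the finite sum in \eqref{eqn:rep} to obtain, for any $\boldsymbol{x}$, $\mathcal{L}_i u^{\boldsymbol{A}}(\boldsymbol{x})=\mathcal{L}_i u(\boldsymbol{x})+\sum_{j=1}^n \mathcal{L}_i\!\left[w_j(\boldsymbol{x})(g_j\circ f_j(\boldsymbol{x})-\mathcal{L}_j u\circ f_j(\boldsymbol{x}))\right]$. Restricting to $\boldsymbol{x}\in A_i$ and invoking the cancellation condition \eqref{eqn:cond}, every $j\neq i$ term vanishes and the $j=i$ term reduces to $g_i\circ f_i(\boldsymbol{x})-\mathcal{L}_i u\circ f_i(\boldsymbol{x})$, so that $\mathcal{L}_i u^{\boldsymbol{A}}(\boldsymbol{x})=\mathcal{L}_i u(\boldsymbol{x})+g_i\circ f_i(\boldsymbol{x})-\mathcal{L}_i u\circ f_i(\boldsymbol{x})$. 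Finally, since $f_i$ fixes the boundary segment, $f_i(\boldsymbol{x})=\boldsymbol{x}$ for $\boldsymbol{x}\in A_i$, the first and third terms cancel and $g_i\circ f_i(\boldsymbol{x})=g_i(\boldsymbol{x})$, leaving $\mathcal{L}_i u^{\boldsymbol{A}}(\boldsymbol{x})=g_i(\boldsymbol{x})$, as required, almost surely on $A_i$.

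I expect the main obstacle to be not the algebra but its justification at the level of sample paths: making precise that the pointwise operator identity \eqref{eqn:cond}, the term-by-term application of $\mathcal{L}_i$, and the exchange of the almost-sure statement with the uncountable quantifier ``for all $\boldsymbol{x}\in A_i$'' are all valid for the random paths of $u$. The two hypotheses are designed precisely to remove this obstacle: \eqref{eqn:condsmooth} guarantees that $\mathcal{L}_i$ can be applied summand-wise on almost every path, while the inclusion $\mathcal{H}_i'\subset\mathcal{C}(\mathcal{D})$ lets me upgrade the constraint from holding at each fixed $\boldsymbol{x}$ to holding simultaneously for all $\boldsymbol{x}\in A_i$ on a single almost-sure event. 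A secondary point to check carefully is that the functionals $u\mapsto \mathcal{L}_j u\circ f_j(\boldsymbol{x})$ are genuinely (jointly) Gaussian functionals of the base field, so that the claimed Gaussianity of $u^{\boldsymbol{A}}$ is rigorous rather than merely formal.
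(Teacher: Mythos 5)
Your proposal is correct and takes essentially the same approach as the paper's proof: verify Gaussianity from closure of GRFs under linear operations, obtain sample-path continuity of $\mathcal{L}_i u^{\boldsymbol{A}}$ from condition \eqref{eqn:condsmooth} together with $\mathcal{L}_i:\mathcal{H}\to\mathcal{H}_i'\subset\mathcal{C}(\mathcal{D})$, and derive the boundary identity by distributing $\mathcal{L}_i$ over the finite sum, invoking the cancellation condition \eqref{eqn:cond}, and using $f_i(\boldsymbol{x})=\boldsymbol{x}$ for $\boldsymbol{x}\in A_i$. Your closing remarks on the pathwise justification and the uncountable quantifier correctly identify exactly what the hypotheses are designed to guarantee, so no gaps remain.
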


\begin{proof}
  The proof is provided in the supplement.
\end{proof}

We denote the resulting linearly-constrained cGRF model of the state satisfying the set of given linear boundary constraints by 
$u^{\boldsymbol{A}} \sim cGRF(\boldsymbol{\mathcal{L}},\boldsymbol{g},\boldsymbol{A}, m_0, k_0)$, where $\boldsymbol{\mathcal{L}}=\{\mathcal{L}_i\mid i=1,...,n\}$, $\boldsymbol{g}=\{g_i\mid i=1,...,n\}$, and $\boldsymbol{A}=\{A_i\mid i=1,...,n\}$. As before, $m_0$ and $k_0$ are the mean and covariance of the base GRF $u$. Note that the representation \eqref{eqn:rep} for a given set of linear boundary constraints is not unique since it is based on the choice of $\boldsymbol{f}$ and $\boldsymbol{w}$. Examples in Section \ref{sec:3.1} are special cases of \eqref{eqn:rep}. 

An important advantage of Theorem \ref{thm:constrGRF} is that it does not require establishing the existence of probability measures conditioned on restrictions over a continuous set, such as constraints over multi-dimensional domain boundaries \citep[see, e.g.,][for a discussion of challenges in proving the existence of such conditioned measures]{Stuart2010InvProblem, Cockayne2019ReviewProbN}. 
Instead, our constructive approach uses well-known properties of linear transformations of GRFs to transform a base GRF into one that satisfies the desired constraints. Our approach also assumes appropriate choices of $k_0, m_0, w_j, g_j$ and $f_j$ such that $u(\cdot, \omega)\in \mathcal{H}$ and $w_j(\cdot)\left(g_j\circ f_j(\cdot) - \mathcal{L}_ju\circ f_j(\cdot, \omega)\right) \in \mathcal{H}$ for $j=1,...,n$. For example, Mat\'ern covariances are convenient candidates for $k_0$ when $\mathcal{H}\subset \mathcal{C}^{\boldsymbol{\beta}}(\mathcal{D})$, for $\boldsymbol{\beta} \in \mathbb{N}^d_0$ and $\boldsymbol{\mathcal{L}}$ involves linear partial differential operators. In addition, we also need to ensure that $m_0, w_j, g_j, f_j\in \mathcal{H}\subset\mathcal{C}^{\boldsymbol{\beta}}(\mathcal{D})$ for $j=1,...,n$ \citep[see, e.g.,][]{Wang2021NonlinearPDE}. Candidates for $\mathcal{H}$ and $\mathcal{H}_i'$ are Sobolev spaces with $p=2$. Candidates for $f_j$ are continuous functions projecting interior points to the boundary and boundary points to themselves, along one of the signed coordinate directions $(\pm 1,0,...,0), (0,\pm 1,0,...,0), (0,...,0, \pm 1)$. If necessary, we further partition $\boldsymbol{A}$ to accommodate such choices of $f_j$, as in Example \ref{ex:Af}.  
Lemma \ref{lemma} provides the mean and covariance of the cGRF $u^{\boldsymbol{A}}$ constructed in Theorem \ref{thm:constrGRF}.

\begin{lemma}
    \label{lemma}
    The mean and covariance function of $u^{\boldsymbol{A}} \sim cGRF(\boldsymbol{\mathcal{L}}, \boldsymbol{g}, \boldsymbol{A}, m_0, k_0)$ are 
    \begin{align*}
        m^{\boldsymbol{A}}(\boldsymbol{x}) &= m_0(\boldsymbol{x}) + \sum_{j=1}^n w_j(\boldsymbol{x}) \ (g_i(f_i(\boldsymbol{x})) - \mathcal{L}_jm_0(f_j(\boldsymbol{x}))),  \\
        k^{\boldsymbol{A}}(\boldsymbol{x}, \boldsymbol{x}') &= k_0(\boldsymbol{x}, \boldsymbol{x}') - \sum_{j'=1}^n w_{j'}(\boldsymbol{x}') \ k_0\mathcal{L}_{j'}^*(\boldsymbol{x}, f_{j'}(\boldsymbol{x}')) -  \sum_{j=1}^n w_j(\boldsymbol{x}) \ \mathcal{L}_jk_0(f_j(\boldsymbol{x}), \boldsymbol{x}') \\
        &+ \sum_{j=1}^n\sum_{j'=1}^n w_j(\boldsymbol{x}) \ w_{j'}(\boldsymbol{x}') \ \mathcal{L}_jk_0\mathcal{L}_{j'}^*(f_j(\boldsymbol{x}), f_{j'}(\boldsymbol{x}')).
    \end{align*}
\end{lemma}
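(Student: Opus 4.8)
The plan is to exploit the fact that the representation \eqref{eqn:rep} expresses $u^{\boldsymbol{A}}$ as an affine-linear functional of the single base field $u$, with $g_j$, $f_j$, $w_j$ all deterministic; consequently both moments follow by linearity and bilinearity of expectation, together with the rule recalled in the Background that a linear operator $\mathcal{L}$ sends a $GRF(m,k)$ to a $GRF(\mathcal{L}m, \mathcal{L}k\mathcal{L}^*)$. First I would compute the mean by taking $\mathbb{E}$ through the finite sum in \eqref{eqn:rep}: using $\mathbb{E}[\mathcal{L}_j u \circ f_j(\boldsymbol{x})] = \mathcal{L}_j m_0(f_j(\boldsymbol{x}))$ and noting that the purely deterministic terms $w_j(\boldsymbol{x}) g_j(f_j(\boldsymbol{x}))$ pass through the expectation unchanged, one obtains $m^{\boldsymbol{A}}(\boldsymbol{x}) = m_0(\boldsymbol{x}) + \sum_{j=1}^n w_j(\boldsymbol{x})\,(g_j(f_j(\boldsymbol{x})) - \mathcal{L}_j m_0(f_j(\boldsymbol{x})))$, which is the stated formula.

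For the covariance, the first step is to center. Writing $\tilde u = u - m_0$ and subtracting the mean just computed, the deterministic $g_j$ contributions cancel and the centered field reduces to $u^{\boldsymbol{A}}(\boldsymbol{x}) - m^{\boldsymbol{A}}(\boldsymbol{x}) = \tilde u(\boldsymbol{x}) - \sum_{j=1}^n w_j(\boldsymbol{x})\,\mathcal{L}_j \tilde u(f_j(\boldsymbol{x}))$. I would then write $k^{\boldsymbol{A}}(\boldsymbol{x},\boldsymbol{x}') = \mathbb{E}[(u^{\boldsymbol{A}}(\boldsymbol{x}) - m^{\boldsymbol{A}}(\boldsymbol{x}))(u^{\boldsymbol{A}}(\boldsymbol{x}') - m^{\boldsymbol{A}}(\boldsymbol{x}'))]$ and expand the product into four groups: the plain term $\mathbb{E}[\tilde u(\boldsymbol{x})\tilde u(\boldsymbol{x}')] = k_0(\boldsymbol{x},\boldsymbol{x}')$, the two cross terms, and the double-sum term. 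Pulling the deterministic weights $w_j(\boldsymbol{x})$, $w_{j'}(\boldsymbol{x}')$ outside the expectation and identifying $\mathbb{E}[\tilde u(\boldsymbol{x})\,\mathcal{L}_{j'}\tilde u(f_{j'}(\boldsymbol{x}'))] = k_0 \mathcal{L}_{j'}^*(\boldsymbol{x}, f_{j'}(\boldsymbol{x}'))$, $\mathbb{E}[\mathcal{L}_j \tilde u(f_j(\boldsymbol{x}))\,\tilde u(\boldsymbol{x}')] = \mathcal{L}_j k_0(f_j(\boldsymbol{x}),\boldsymbol{x}')$, and $\mathbb{E}[\mathcal{L}_j \tilde u(f_j(\boldsymbol{x}))\,\mathcal{L}_{j'}\tilde u(f_{j'}(\boldsymbol{x}'))] = \mathcal{L}_j k_0 \mathcal{L}_{j'}^*(f_j(\boldsymbol{x}), f_{j'}(\boldsymbol{x}'))$ reproduces exactly the four stated terms. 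These are all instances of the same adjoint bookkeeping: an operator acting on $\tilde u$ in the first slot lands on the first argument of $k_0$, and one in the second slot lands on the second argument, consistent with the notation of \eqref{naivecond}.

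The genuine work, rather than the algebra, is justifying the interchanges: that $\mathbb{E}$ commutes with each $\mathcal{L}_j$, with the precomposition $\cdot \circ f_j$, and with the finite sum, and that the resulting cross-covariances are indeed obtained by applying $\mathcal{L}_j$ (or its adjoint $\mathcal{L}_j^*$) to the corresponding argument of $k_0$. This is precisely where the hypotheses of Theorem \ref{thm:constrGRF} enter: the assumption $u(\cdot,\omega)\in\mathcal{H}$ almost surely together with the membership condition \eqref{eqn:condsmooth} ensure that each $\mathcal{L}_j u \circ f_j$ is a well-defined (mean-square) random field to which the Background rule $\mathcal{L}: GRF(m,k)\mapsto GRF(\mathcal{L}m,\mathcal{L}k\mathcal{L}^*)$ applies, so the operators may legitimately be moved inside the expectation. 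I expect this verification — confirming that sample paths lie in the right space and that the operators commute with expectation on $\mathcal{H}$ — to be the only substantive step; once it is in place the two formulas are a direct expansion, and since Gaussianity of $u^{\boldsymbol{A}}$ is already established in Theorem \ref{thm:constrGRF}, the computed mean and covariance determine its law completely.
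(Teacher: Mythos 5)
Your proposal is correct and is precisely what the paper's one-line proof (``follows directly from Theorem \ref{thm:constrGRF} and the properties of GRFs'') compresses: linearity and bilinearity of expectation applied to the representation \eqref{eqn:rep}, together with the rule that a linear operator sends $GRF(m,k)$ to $GRF(\mathcal{L}m,\mathcal{L}k\mathcal{L}^*)$, with the theorem's hypotheses guaranteeing that each $\mathcal{L}_j u \circ f_j$ is well defined so the interchanges are legitimate. One small point in your favor: your mean formula with $g_j(f_j(\boldsymbol{x}))$ is the correct one --- the appearance of the index $i$ in the lemma's stated mean is a typo in the paper.
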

\begin{proof}
    The proof follows directly from Theorem \ref{thm:constrGRF} and the properties of GRFs.
\end{proof}

Figures \ref{fig:intro2}, \ref{fig:1dmixed}, and \ref{fig:triperiodic} illustrate samples from various cGRFs. 
Another interesting consequence of Theorem \ref{thm:constrGRF} is illustrated in the bottom row of Figure \ref{fig:triperiodic}. For a rectangular domain, the constrained covariance $k^{A}(\boldsymbol{x},\boldsymbol{x}')=k_1(x_1,x_1')  k_2^{A}(x_2,x_2')$ inherits the product covariance structure from the base covariance $k_0(\boldsymbol{x},\boldsymbol{x}')=k_1(x_1,x_1')  k_2(x_2,x_2')$, as shown in the supplement for the case where $k_1$ and $k_2$ are periodic and Mat\'ern covariances. As a result, the three samples from the resulting cGRF shown in the bottom row of Figure \ref{fig:triperiodic} satisfy periodic and fixed-state constraints in dimensions $x_1$ and $x_2$, respectively. In general, however, the cGRF framework accommodates any form of multi-dimensional base covariance $k_0$ under the conditions in Theorem \ref{thm:constrGRF}.

\begin{figure}[t]
    \centering
    \fbox{\includegraphics[width=0.9\linewidth]{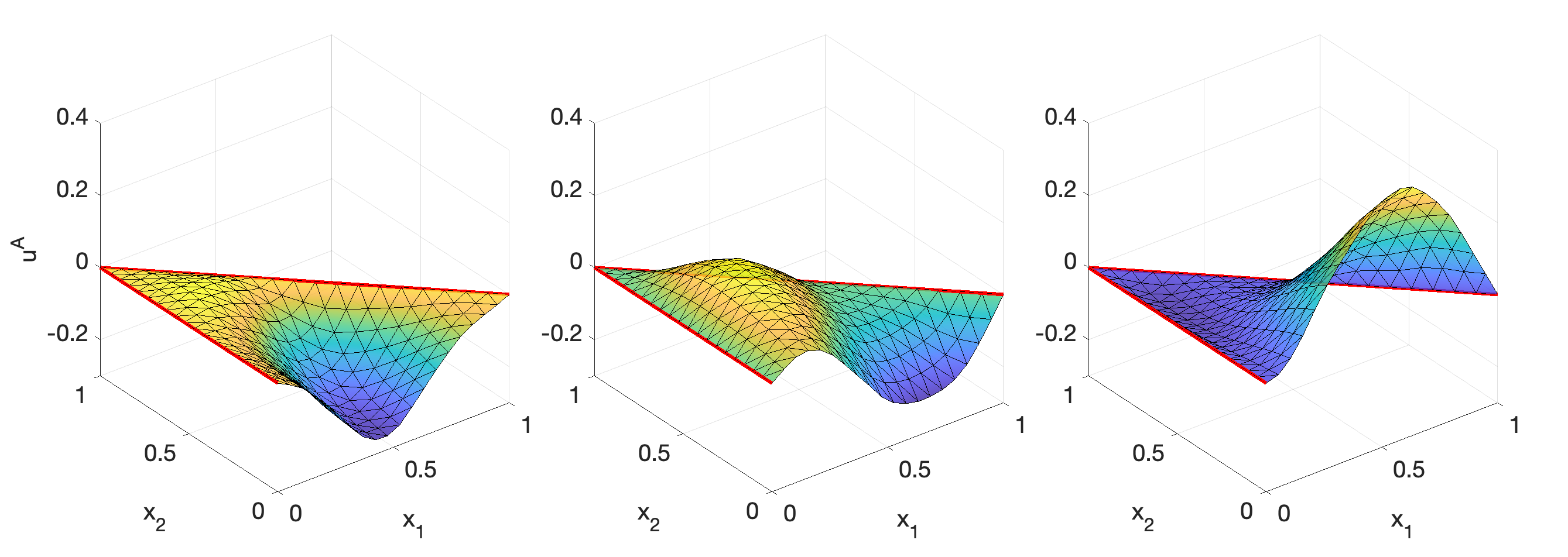}}
    \fbox{\includegraphics[width=0.9\linewidth]{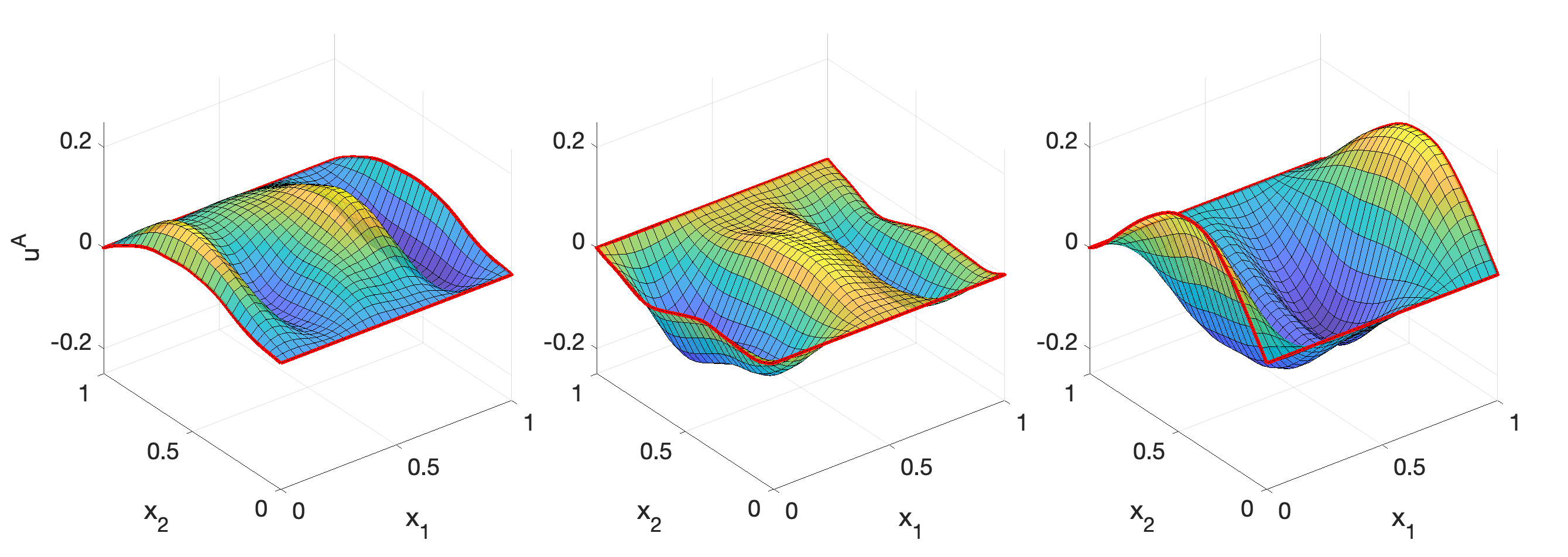}}
    \caption{Three samples from a cGRF on the triangular domain (top row) and  the unit square domain (bottom row; base covariances have product structure with periodic and Mat\'ern components, respectively). Red solid lines indicate boundary constraints.}
    \label{fig:triperiodic}
\end{figure}

\subsection{Constructing weight functions}
\label{sec:3.3}

Next, we provide some guidance on how to select the weight functions $\boldsymbol{w}$ in practice. The following conjecture is inspired by the structure of the pointwise conditioning formula \eqref{naivecond}, and its use is illustrated in the two following examples, as well as for constructing fixed-state cGRFs over triangular or disk-shaped domains as described in the supplement.

\begin{conjecture}
\label{conjecture1}
Consider the set of linear boundary constraints  $\mathcal{L}_iu^{\boldsymbol{A}}(\boldsymbol{x})=g_i(\boldsymbol{x})$ over $\boldsymbol{x}\in A_i$ for $i=1,...,n$. 
For a given choice of projection functions 
$f_1,\ldots, f_n$
corresponding to boundary segments $A_1,\ldots, A_n$, consider the vector of weight functions
\begin{equation}
\label{eqn:w}
    \boldsymbol{w}(\boldsymbol{x})=\boldsymbol{v}^\top(\boldsymbol{x})M^{-1}(\boldsymbol{x}),
\end{equation}
where $\boldsymbol{v}:\mathcal{D}\to\mathbb{R}^{n}$ and $M:\mathcal{D}\to \mathbb{R}^{n \times n}$ are functions with elements $\boldsymbol{v}_i(\boldsymbol{x})=k_0\mathcal{L}_{i}^*(\boldsymbol{x}, f_{i}(\boldsymbol{x}))$ and $M_{ij}(\boldsymbol{x})=\mathcal{L}_{i}k_0\mathcal{L}_{j}^*(f_{i}(\boldsymbol{x}),f_{j}(\boldsymbol{x}))$, respectively, for $i,j=1,...,n$.
Then, the cGRF representation \eqref{eqn:rep} from Theorem \ref{thm:constrGRF} enforces the desired boundary constraints as long as 
conditions \eqref{eqn:condsmooth} and \eqref{eqn:cond} are satisfied. 
\end{conjecture}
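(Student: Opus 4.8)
The plan is to show that the weight functions defined by \eqref{eqn:w} satisfy the constraint-enforcing condition \eqref{eqn:cond} of Theorem \ref{thm:constrGRF}; once this is done, representation \eqref{eqn:rep} enforces $\mathcal{L}_iu^{\boldsymbol{A}}(\boldsymbol{x})=g_i(\boldsymbol{x})$ on $A_i$ directly by that theorem, with \eqref{eqn:condsmooth} entering only as a regularity hypothesis to be checked separately (so the substantive claim is that \eqref{eqn:w} makes \eqref{eqn:cond} hold). Writing $h_j(\boldsymbol{x}):=g_j\circ f_j(\boldsymbol{x})-\mathcal{L}_ju\circ f_j(\boldsymbol{x})=(g_j-\mathcal{L}_ju)\circ f_j(\boldsymbol{x})$, condition \eqref{eqn:cond} is the cardinality-type identity $\mathcal{L}_i\!\left[w_j(\cdot)\,h_j(\cdot)\right](\boldsymbol{x})=\delta_{ij}\,h_i(\boldsymbol{x})$ for $\boldsymbol{x}\in A_i$; note that on $A_i$ one has $f_i(\boldsymbol{x})=\boldsymbol{x}$, so the target value is $h_i(\boldsymbol{x})=g_i(\boldsymbol{x})-\mathcal{L}_iu(\boldsymbol{x})$. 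Substituting \eqref{eqn:w}, the correction term $\sum_j w_jh_j=\boldsymbol{v}^\top M^{-1}\boldsymbol{h}$ (with $\boldsymbol{h}=(h_1,\ldots,h_n)^\top$) is precisely the pointwise-conditioning increment of \eqref{naivecond}, but with the fixed design replaced by the moving projections $f_1(\boldsymbol{x}),\ldots,f_n(\boldsymbol{x})$; this is the structural observation motivating the recipe.

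The key algebraic fact I would isolate is a boundary cardinality identity: for $\boldsymbol{x}\in A_i$, since $f_i(\boldsymbol{x})=\boldsymbol{x}$, the $i$th row of $M$ is obtained from $\boldsymbol{v}$ by applying $\mathcal{L}_i$ to the evaluation argument, namely $M_{il}(\boldsymbol{x})=\mathcal{L}_ik_0\mathcal{L}_l^*(\boldsymbol{x},f_l(\boldsymbol{x}))=\bigl(\mathcal{L}_i^{(1)}\boldsymbol{v}_l\bigr)(\boldsymbol{x})$, where $\mathcal{L}_i^{(1)}$ denotes $\mathcal{L}_i$ acting only on the explicit first-argument dependence of $\boldsymbol{v}_l(\boldsymbol{x})=k_0\mathcal{L}_l^*(\boldsymbol{x},f_l(\boldsymbol{x}))$ and not on the projected second argument. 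Hence the part of $\mathcal{L}_i(\boldsymbol{v}^\top M^{-1}\boldsymbol{h})$ in which $\mathcal{L}_i$ differentiates only that evaluation argument equals $\bigl(\mathcal{L}_i^{(1)}\boldsymbol{v}\bigr)^\top M^{-1}\boldsymbol{h}=M_{i\cdot}M^{-1}\boldsymbol{h}=\boldsymbol{e}_i^\top\boldsymbol{h}=h_i$, exactly the required right-hand side. In particular, for fixed-state constraints ($\mathcal{L}_i=\mathcal{I}$ for all $i$) there is no differentiation at all: the identity collapses to $M_{il}(\boldsymbol{x})=\boldsymbol{v}_l(\boldsymbol{x})$ on $A_i$, so $\boldsymbol{w}(\boldsymbol{x})=\boldsymbol{v}^\top M^{-1}=\boldsymbol{e}_i^\top$ and $w_j(\boldsymbol{x})=\delta_{ij}$, closing the argument completely. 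This is the continuous analogue of the interpolation property of \eqref{naivecond}, whereby applying $\mathcal{L}_i$ to the $j$th conditioning weight and evaluating at the $i$th design point returns $\delta_{ij}$.

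The remaining, and genuinely hard, step is to control the \emph{moving-target} contributions to $\mathcal{L}_i(\boldsymbol{v}^\top M^{-1}\boldsymbol{h})$: the terms in which $\mathcal{L}_i$ differentiates the projected arguments $f_j(\boldsymbol{x})$ appearing inside $\boldsymbol{v}$ and $M$ (the latter producing $\partial M^{-1}=-M^{-1}(\partial M)M^{-1}$) and inside $h_j=(g_j-\mathcal{L}_ju)\circ f_j$, together with the mixed terms in which part of a higher-order $\mathcal{L}_i$ hits the evaluation argument and part hits the moving arguments. I would need to prove that the sum of all such terms vanishes on $A_i$. They vanish automatically in the two regimes covered by the known examples: when $f_j$ is constant (a single-point boundary, as in the derivative case of Example \ref{ex:u0}, where $h_j$ and $M$ are constant, so every moving-target derivative is zero), and, more generally, when the directions in which $\mathcal{L}_i$ differentiates are aligned with the projection direction of the relevant $f_j$, so that $(g_j-\mathcal{L}_ju)\circ f_j$ is constant along those directions and drops out. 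The obstacle is that for arbitrary linear operators, convex segments $A_i$, and nonstationary base covariances there is no evident cancellation of the tangential moving-target terms; establishing it would require either an envelope-type argument exploiting the optimal-prediction structure of $\boldsymbol{v}^\top M^{-1}$, or explicit structural hypotheses (operators differentiating only normal to $A_i$, coordinate-aligned projections, or product-form $k_0$). I expect this to be the main obstacle, and it is presumably why the statement is posed as a conjecture; my proposal would be to prove it rigorously under such structural hypotheses and to verify numerically that it persists more broadly.
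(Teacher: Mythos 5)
First, a point of orientation: the paper does not prove this statement at all --- it is deliberately posed as a \emph{conjecture}, supported only by worked examples (the Gaussian-bridge recovery, Example \ref{ex:w}, Example \ref{ex:variant}, and the disk/triangle constructions in the supplement). So there is no paper proof to compare yours against; the question is whether your argument settles, partially settles, or misreads the claim. Your reading is the right one: taken literally, the statement is an immediate corollary of Theorem \ref{thm:constrGRF} (once \eqref{eqn:condsmooth} and \eqref{eqn:cond} are assumed there is nothing left to prove), so the substantive content is that the recipe \eqref{eqn:w} forces \eqref{eqn:cond}. Your key boundary identity --- that on $A_i$, since $f_i(\boldsymbol{x})=\boldsymbol{x}$, the $i$th row of $M$ coincides with $\mathcal{L}_i$ applied to the explicit first-argument dependence of $\boldsymbol{v}$, so that this part of $\mathcal{L}_i\bigl(\boldsymbol{v}^\top M^{-1}\boldsymbol{h}\bigr)$ collapses to $\boldsymbol{e}_i^\top\boldsymbol{h}=h_i$ --- is correct, and it genuinely closes the fixed-state case ($w_j=\delta_{ij}$ on $A_i$) and the coordinate-aligned case in which the operators differentiate only along the projection direction, so that every composition with $f_j$ is annihilated by the derivative. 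Those two regimes are exactly the evidence the paper offers: Example \ref{ex:w} is the aligned case, Example \ref{ex:variant} deliberately reformulates a misaligned constraint into an aligned one, and the Gaussian bridge is the constant-projection case. One caveat you gloss over: even your fixed-state argument needs $M(\boldsymbol{x})$ invertible on $A_i$, and this fails where two constrained segments meet (there $f_i(\boldsymbol{x})=f_j(\boldsymbol{x})=\boldsymbol{x}$, so $M$ has repeated rows --- e.g., the corner shared by adjacent sides of the square), so ``closing the argument completely'' requires a continuity or limiting argument near such intersections, or their exclusion.

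The gap you then identify --- the moving-target terms in which $\mathcal{L}_i$ differentiates $f_j(\boldsymbol{x})$ inside $\boldsymbol{v}$, inside $M^{-1}$ (via $-M^{-1}(\partial M)M^{-1}$), and inside $\boldsymbol{h}$, with no evident cancellation for general operators, segments, and nonstationary $k_0$ --- is genuine, but it is precisely the part of the claim that the authors themselves leave unestablished; it is, as you suspect, why this is a conjecture rather than a second theorem. Your proposed fallback (prove the implication under structural hypotheses such as normal-direction operators and coordinate-aligned projections, and test more general cases numerically) is in effect what the paper does implicitly through its examples. So the honest verdict is: your attempt is not a complete proof, but no complete proof exists in the paper either; what you have contributed is a correct formalization of the trivial reduction, a correct proof of the two tractable regimes that underlie all of the paper's uses of the recipe, and a concrete identification of the open obstruction --- which is the appropriate outcome for a statement whose general validity remains conjectural.
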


As an example, when $\mathcal{D}=[0,T]$, $\boldsymbol{\mathcal{L}}=\{\mathcal{I},\mathcal{I}\}$, $\boldsymbol{g}=\{\xi,\theta\}$, and $\boldsymbol{A}=\{0,T\}$, the cGRF representation based on this choice of $\boldsymbol{w}$ is equivalent to the Gaussian bridge representation.

Notice that as compared to the choices of $\boldsymbol{w}$ in Section \ref{sec:3.1.2}, the recipe we propose here supports a more systematic way of constructing different types of cGRFs, at the expense of being more complicated. The optimal structure of $\boldsymbol{w}$ should be decided case by case. The following examples enforce mixed boundary constraints based on this choice of $\boldsymbol{w}$.

\begin{example}
\label{ex:w}
Consider the unit square domain $\mathcal{D}=[0,1]^2$, and the mixed boundary constraints $\mathcal{L}_iu^{\{A_1,A_2\}}(\boldsymbol{x})=0$ over $\boldsymbol{x} \in A_{i}$, for $i=1,2$, where $\mathcal{L}_i = a_i\partial_{x_1} + b_i$ with $a_i,b_i\in \mathbb{R}$, $A_{1}=\{\boldsymbol{x} \in \partial \mathcal{D} \mid x_1=0 \}$, and $A_{2}=\{\boldsymbol{x} \in \partial \mathcal{D} \mid x_1=1 \}$. Let $f_1(\boldsymbol{x})=(0,x_2)$, and $f_2(\boldsymbol{x})=(1,x_2)$. A cGRF satisfying the mixed boundary constraints is obtained using \eqref{eqn:w} with $\boldsymbol{w}(\boldsymbol{x})=\boldsymbol{v}^\top(\boldsymbol{x})M^{-1}(\boldsymbol{x})$, where $\boldsymbol{v}_i(\boldsymbol{x})=k_0\mathcal{L}_{i}^*(\boldsymbol{x}, f_{i}(\boldsymbol{x}))=a_i \, \partial_{x_1}k_0(\boldsymbol{x},f_i(\boldsymbol{x})) + b_i \, k_0(\boldsymbol{x},f_i(\boldsymbol{x}))$, and $M_{ij}(\boldsymbol{x}) = \mathcal{L}_ik_0\mathcal{L}_j^*(f_i(\boldsymbol{x}), f_j(\boldsymbol{x})) = a_ia_j \, \partial_{x_1}\partial_{x_1'}k_0(f_i(\boldsymbol{x}), f_j(\boldsymbol{x}))+ a_ib_j \, \partial_{x_1} k_0(f_i(\boldsymbol{x}), f_j(\boldsymbol{x})) + a_jb_i \, \partial_{x_1'}k_0(f_i(\boldsymbol{x}), f_j(\boldsymbol{x})) +b_ib_j \, k_0(f_i(\boldsymbol{x}), f_j(\boldsymbol{x}))$, for $i,j=1,2$. Here $\partial_{x_1}$ and $\partial_{x_1'}$ differentiate $k_0$ with respect to the first and second arguments, respectively.  As special cases, when $a_0=a_1=0$, the cGRF enforces fixed-state constraints. When $b_0=b_1=0$, the cGRF enforces fixed-derivative constraints. The mixed boundary constraints here also include the one-sided constraints, when $a_0=b_0=0$ or $a_1=b_1=0$. See Figure \ref{fig:1dmixed} and the left panel in Figure \ref{fig:intro2} for cGRFs with mixed boundary constraints on $1$- and $2$-dimensional domains.
\end{example}

\begin{example}
    \label{ex:variant}
    Certain boundary constraints that fall outside the scope of Theorem \ref{thm:constrGRF} 
    by violating condition \eqref{eqn:condsmooth} can be re-formulated to fit the cGRF framework. 
    Suppose we want to enforce $\partial_{x_1}u^A(x_1,x_2)=0$ at $A=\{(x_1,x_2)\subset \partial\mathcal{D} \mid x_2=0\}$, on $\mathcal{D}=[0,1]^2$. 
    Let $f(x_1,x_2)=(x_1,0)$. Notice that $\partial_{x_1}u \circ f(x_1,x_2)=\partial_{x_1}u(x_1,0)$ is less smooth than $u(x_1,x_2)$ in the $x_1$ dimension, 
    violating condition \eqref{eqn:condsmooth}. Nevertheless, we can enforce this boundary constraint indirectly by translating it into $u(x_1,x_2)=c$ at $A$ for some constant $c$, if known.
    Condition \eqref{eqn:condsmooth} is satisfied for the corresponding cGRF $u^A(x_1,x_2)=u(x_1,x_2)+w(x_1,x_2) \ (c-u\circ f(x_1,x_2))$, where $w(x_1,x_2) = k((x_1,x_2),(x_1,0)) \ k((x_1,0),(x_1,0))^{-1}$ is obtained using equation \eqref{eqn:w}. The application on tensile testing in Section \ref{sec-app} demonstrates the enforcement of this type of boundary constraint.
\end{example}

\section{Applications}\label{sec-app}

We consider some illustrative applications where replacing an unconstrained with a constrained probability model improves predictive performance and leads to more physically realistic results. We discuss the application of cGRFs to probabilistic numerical methods, data-driven discovery of dynamic systems, and inference for a displacement field from limited experimental measurements to study a material’s response to pulling forces. Since the temporal and spatial dimensions are often treated differently in applications, we will use the notation $u^{\boldsymbol{A}}(t,\boldsymbol{x})$ for cGRF models defined over a spatio-temporal domain.

\subsection{Probabilistic solvers for PDE boundary value problems}

Partial differential equations (PDEs) model states implicitly by relating them to their derivatives with respect to spatial or temporal variables. Inference requires an explicit representation of the state, or solution, which in many cases is only available numerically. The resulting unmodeled discretization uncertainty can lead to biased inference on calibration parameters, as well as over-confident uncertainty estimates. This has spurred recent interest in the field of probabilistic numerics (see \citet{Cockayne2019ReviewProbN} for a comprehensive review) which seeks to model this type of uncertainty via so-called \emph{probabilistic solvers} and propagate it through the statistical inverse problem. Careful modeling ensures that the resulting structured uncertainty agrees with the underlying dynamics. This requires choosing a prior over the state that reflects its known structure, including all boundary conditions. However, such priors were previously unavailable, and pointwise-conditioned GRFs were used, leading to the potential for numerical instability and less realistic modeling. 

\begin{figure}[t]
    \centering
    \includegraphics[width=0.9\linewidth]{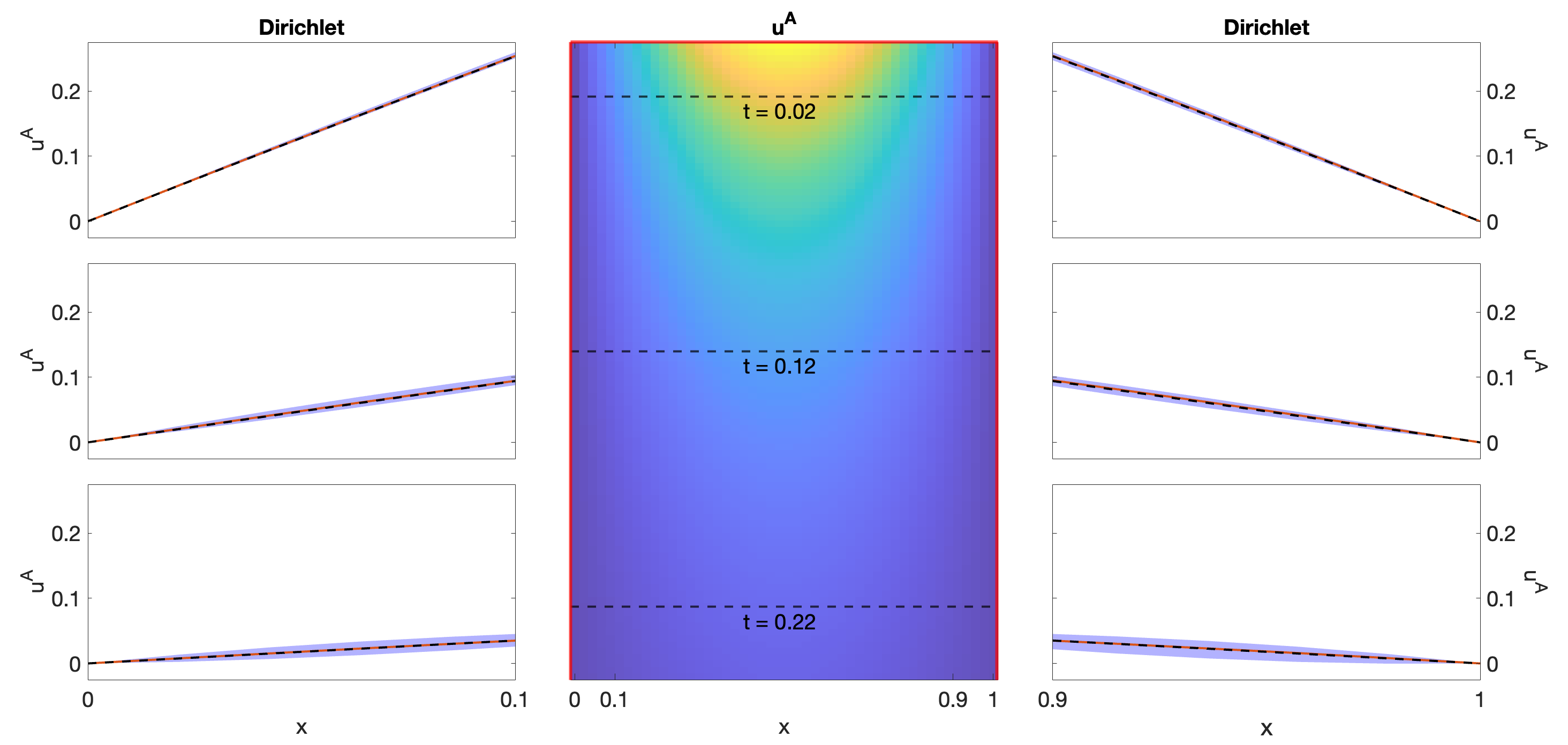}
    \caption{Center panel: mean probabilistic solution with initial condition $u(t,x) = \sin(\pi x)$ and Dirichlet boundary conditions along red solid lines. Left and right panels: empirical estimates of posterior uncertainty near $x=0$ and $x=1$ for three time steps marked by dashed lines in the center panel. Posterior means are shown in orange, pointwise $95\%$ credible intervals are represented by blue bands.}
    \label{fig:solverD}
\end{figure}

In one example, \citet{Chkrebtii2016ProbSolver} propose a state-space probabilistic solver which conditions a prior GRF model of the state on iterative evaluations of the PDE information operator. To illustrate the use of cGRFs in this setting, we consider the heat equation under two distinct sets of initial and boundary conditions over the temporal domain $[0,0.25]$ and the spatial domain $[0,1]$. Let $\mathcal{D} = [0,0.25]\times [0,1]$, and let $u:\mathcal{D}\to \mathbb{R}$ be the solution satisfying the PDE $u_{t}(t,x) = u_{xx}(t,x), \ (t,x)\in \mathcal{D}$, with boundary constraints
\begin{align}
\begin{split}
\label{f}
\mathcal{L}_1u(t,x) &= g_1(x), \ (t,x)\in A_1, \\
\mathcal{L}_2u(t,x) &= g_2(t), \ (t,x)\in A_2, \\
\mathcal{L}_3u(t,x) &= g_3(t), \ (t,x)\in A_3,
\end{split}
\end{align}
for linear operators $\boldsymbol{\mathcal{L}}=\{\mathcal{L}_1=\mathcal{I},\mathcal{L}_2,\mathcal{L}_3\}$ and target functions $\boldsymbol{g}=\{g_1,g_2,g_3\}$, to be defined next for two distinct examples. The boundary segments are $\boldsymbol{A}=\{A_1,A_2,A_3\}=\{\{0\}\times[0,1], (0,0.25]\times\{0\}, (0,0.25]\times\{1\}\}$. 
Following \citet{Chkrebtii2016ProbSolver}, we discretize the spatio-temporal domain by a $100\times16$ uniform grid, and carry out a ``forward in time, continuous in space'' sampling strategy which sequentially updates the joint prior $(u,u_t,u_{xx})\sim cGRF(\boldsymbol{\mathcal{L}},\boldsymbol{g}, \boldsymbol{A},m_0,k_0)$ on synthetic data about $u_{xx}$ across the domain. We use a squared exponential base covariance $k_0$ with length-scale hyperparameters $0.0175$ in the temporal dimension and $0.4573$ in the spatial dimension, proportional to the temporal and spatial step sizes, respectively. The prior precision hyperparameter is set to $1$. The probabilistic solution is obtained by marginalizing over the synthetic data and concentrates on the exact solution $u$ as the grid size grows. 

\begin{figure}[t]
    \centering
    \includegraphics[width=0.9\linewidth]{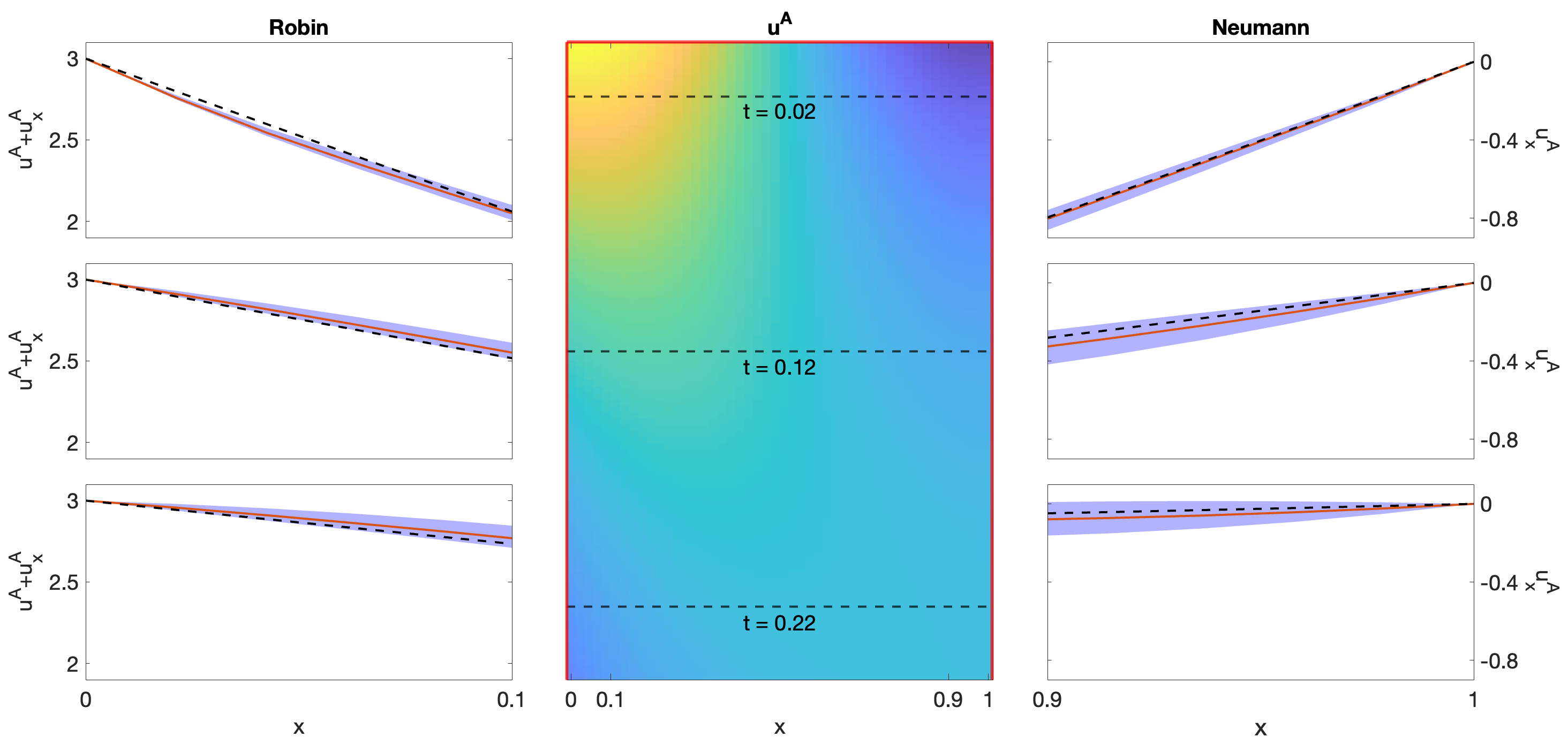}
    \caption{Center panel: mean probabilistic solution with initial condition $u(t,x) = \cos(\pi x) + 2$ and mixed Robin and Neumann boundary conditions along red solid lines. Left and right panels: empirical estimates of posterior uncertainty near $x=0$ and $x=1$ for three time steps marked by dashed lines in the center panel. Posterior means are shown in orange, pointwise $95\%$ credible intervals are represented by blue bands.}
    \label{fig:solverR}
\end{figure}

Figures \ref{fig:solverD} and \ref{fig:solverR} visualize discretization uncertainty for the two examples relative to the solution approximated via a high-order adaptive numerical method (MATLAB \texttt{pdepe} solver; black dashed line) for two examples with different boundary constraints. The first example uses the initial condition $u(t,x) = \sin(\pi x)$ and Dirichlet boundary conditions, $g_2(t)=g_3(t)=0$, and $\mathcal{L}_2=\mathcal{L}_3=\mathcal{I}$. The second example uses the initial condition $u(t,x) = \cos(\pi x) + 2$ and the mixed Robin and Neumann boundary conditions, $g_2(t)=3$, $g_3(t)=0$, and $\mathcal{L}_2=\mathcal{I}+\partial_x$, $\mathcal{L}_3=\partial_x$. The accumulation of discretization uncertainty is reflected in the widening of the credible intervals over time, while tending to zero as we approach the boundaries, reflecting the known physical mechanisms operating there.

\subsection{Data-driven discovery of dynamical systems}

When a reliable mathematical model for the state is unavailable, a data-informed approach may be used for selecting an appropriate dynamical system from a set of candidates to help practitioners understand the physical phenomenon and predict its future behavior \citep[e.g.,][]{North2023ReviewDiscPDE, North2025DiscoveryPDE}. For example, \citet{Chen2020Lulu} develop a GRF-assisted active learning algorithm that iteratively performs penalized linear regression of the temporal derivative on a collection of candidate model terms consisting of functions of the state and its spatial derivatives. These terms are evaluated at the input locations selected by an active learning algorithm. A GRF model of the latent state and its derivatives is used in the objective function to quantify the information gain from collecting data at potential input locations for choosing model components from a large set of candidates. Because GRFs do not in general enforce known boundary constraints, \citet{Chen2020Lulu} restricts attention to problems on unbounded domains. Introducing the known boundary constraints via cGRF modeling allows us to consider more realistic settings and can improve algorithm performance.

\begin{figure}[t]
    \centering
    \includegraphics[width=0.9\linewidth]{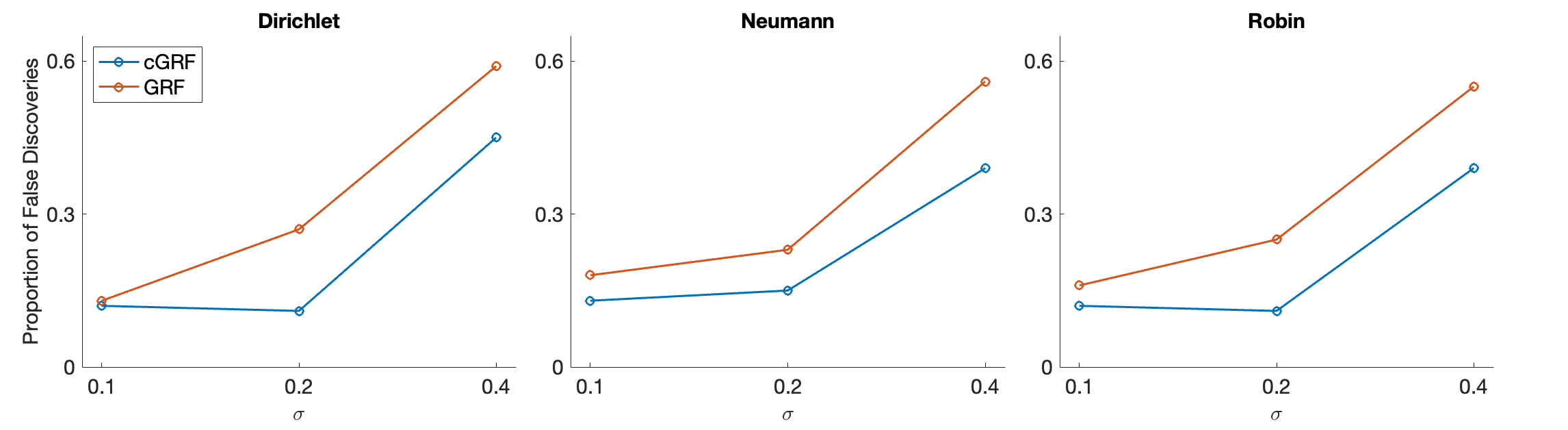}
    \caption{Proportion of false discoveries with cGRF (blue) and GRF (orange) at three levels of measurement error based on $100$ simulations under different boundary constraints (left to right).}
    \label{fig:percent_FD}
\end{figure}

\begin{figure}[ht]
    \centering
    \includegraphics[width=0.9\linewidth]{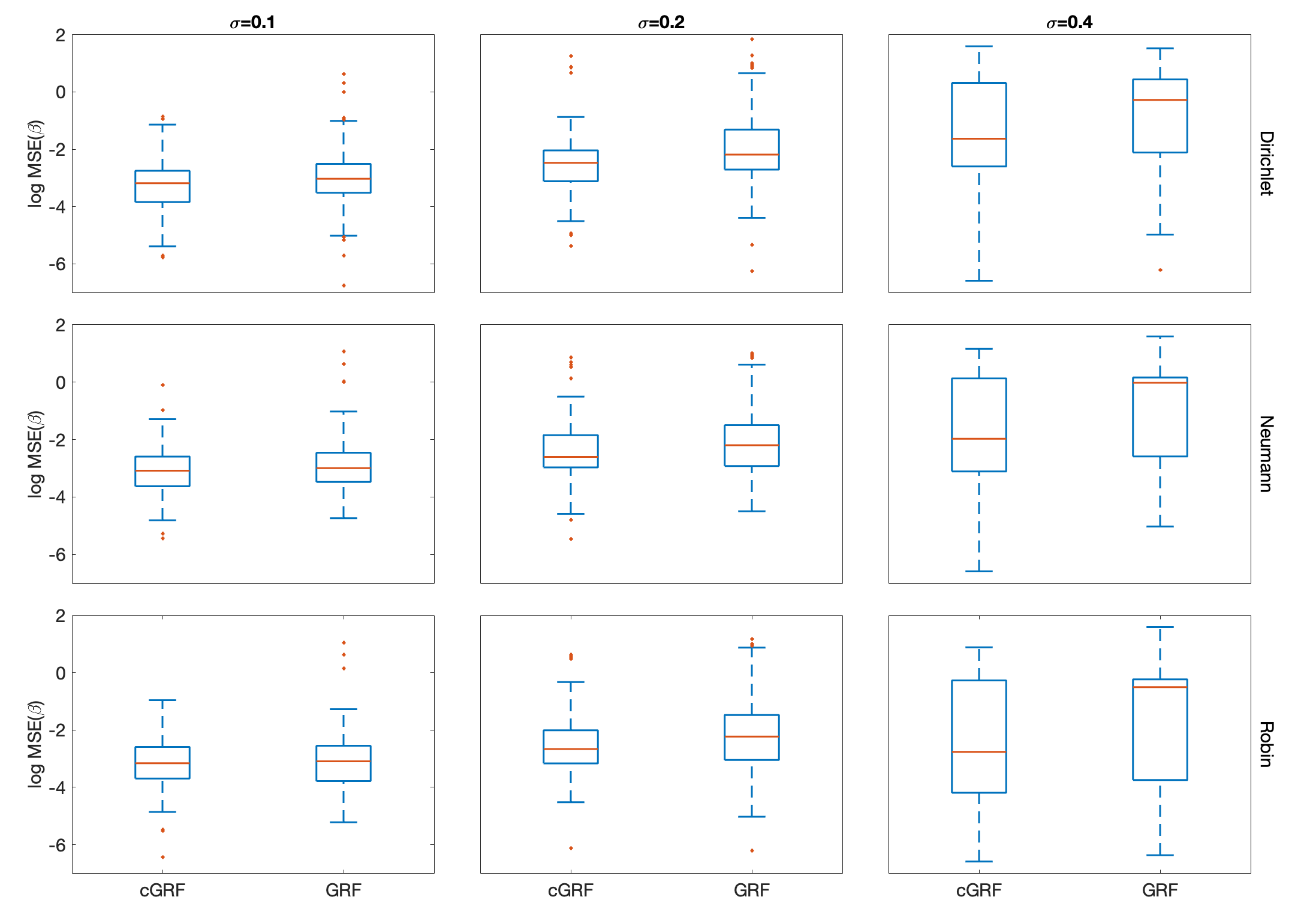}
    \caption{Log MSE for estimated coefficients based on $100$ simulations using $cGRF$ and $GRF$ under different settings. Measurement standard deviations are $\sigma=0.1,0.2$ and $0.4$ from left to right. Boundary condition types are two-sided Dirichlet, Neumann and Robin from top to bottom.}
    \label{fig:msebeta}
\end{figure}

As an illustration, we apply this approach to learn Burger's equation from simulated data under both a boundary-constrained cGRF model for the state and an unconstrained alternative. 
Consider the temporal domain $[0,1]$ and the spatial domain $[-10,10]$. Let $\mathcal{D}=[0,1]\times[-10,10]$, and let $u:\mathcal{D}\to \mathbb{R}$ be the underlying state satisfying the differential equation model $u_{t}(t,x) = u_{xx}(t,x)-uu_x(t,x), \ (t,x)\in \mathcal{D}$ with boundary constraints in \eqref{f}, for linear operators $\boldsymbol{\mathcal{L}}=\{\mathcal{L}_1=\mathcal{I},\mathcal{L}_2,\mathcal{L}_3\}$ and target functions $\boldsymbol{g}=\{g_1,g_2,g_3\}$, to be defined later for three distinct examples. The boundary segments are $\boldsymbol{A}=\{A_1,A_2,A_3\}=\{\{0\}\times[-10,10], (0,1]\times\{-10\}, (0,1]\times\{10\}\}$. We use tri-modal initial condition as in \citet{Chen2020Lulu} corresponding to $g_1(x)=2\exp(-15(x-9)^2)+1.5\exp(-15(x+1)^2)+\exp(-25(x+9)^2)$, $g_2(t)=\mathcal{L}_2g_1(-10)$, and $g_3(t)=\mathcal{L}_3g_1(10)$. Algorithm performance is compared when using the unconstrained $GRF(0,k_0)$ model and the constrained version,  $cGRF(\boldsymbol{\mathcal{L}},\boldsymbol{g},\boldsymbol{A},0,k_0)$. We test each setting under Dirichlet, Neumann, and Robin boundary conditions, corresponding to $\mathcal{L}_2=\mathcal{L}_3=\mathcal{I}$, $\mathcal{L}_2=\mathcal{L}_3=\partial_x$, and $\mathcal{L}_2=\mathcal{L}_3=\mathcal{I}+\partial_x$, respectively. We adopt the algorithm settings in \citet{Chen2020Lulu}, including using the squared exponential covariance $k_0$, the same collection of candidate terms $\{1, u, u^2, u^3, u_x,...,u_xu^2_{xx}, uu_xu_{xx}\}$, and 
levels of measurement error variance $\sigma^2=0.2^2,0.4^2$. The two evaluation criteria are the proportion of false discoveries including missing the two correct terms and selecting the remaining false terms, and the log mean squared error (MSE) of the estimated coefficients. As shown in Figures \ref{fig:percent_FD} and \ref{fig:msebeta}, performance in both metrics is improved when using cGRFs by accounting for the additional information available at the boundary. The supplement also provides a frequency plot for falsely identified terms based on $100$ simulations.

\subsection{Boundary-constrained state estimation}
\label{sec:4.3}
Materials used in industrial applications are subjected to a variety of tensile forces. The tensile test is a fundamental experimental method in materials science engineering for measuring a material's response to these forces. During the test, discrete measurements of the displacement field are recorded as a dog-bone-shaped specimen is stretched upward by a moving crosshead at a constant speed until fracture (see illustration in Figure \ref{fig:intro1}). The strain field is then calculated as the spatial derivative of the displacement field along the vertical ($x_2$) direction, and together with the stress measurements, is used to construct the stress–strain curve. This curve establishes critical material properties, including Young’s modulus, yield strength, ultimate tensile strength, and elongation at fracture, and its reliability depends on the accuracy of the displacement predictions at unobserved locations.
In this section, we evaluate the predictive performance of GRF-based estimation of displacement from discrete measurements with and without prior boundary constraint enforcement. The data, shown in Figure \ref{fig:intro1}, is obtained using the COMSOL software for multiphysics simulation.

Consider the temporal domain $[0,1]$ and the spatial domain $\mathcal{D}_s$, which is a dog-bone shaped spatial domain with lower boundary $A_1=\{(x_1,x_2) \mid x_2=-10\}$ and upper boundary $A_2=\{(x_1,x_2) \mid x_2=10\}$, marked in red in the left panel of Figure \ref{fig:intro1}. Denote $\mathcal{D}=[0,1]\times \mathcal{D}_s$, and $\boldsymbol{A}=\{A_1,A_2\}$. Let $u: \mathcal{D}\to \mathbb{R}$ be the displacement field in the vertical ($x_2$) direction. Then the boundary constraints are $u(t,x_1,x_2)=0.005t$ at $x_2=10$ and $u(t,x_1,x_2)=0$ at $x_2=-10$, meaning the top of the specimen at $A_2$ is pulled upwards by the moving crosshead at the constant speed $0.005$ mm/s, while the bottom side is held in place and remains stationary. Notice that no constraints are necessary on the displacement field along the vertical sides of the domain (segments not highlighted in red in the first panel of Figure \ref{fig:intro1}). Notice also that although the specimen is stretched, the spatial domain remains unchanged, because displacement is measured relative to the starting locations.

\begin{figure}[t]
    \centering
    \includegraphics[width=0.768\linewidth]{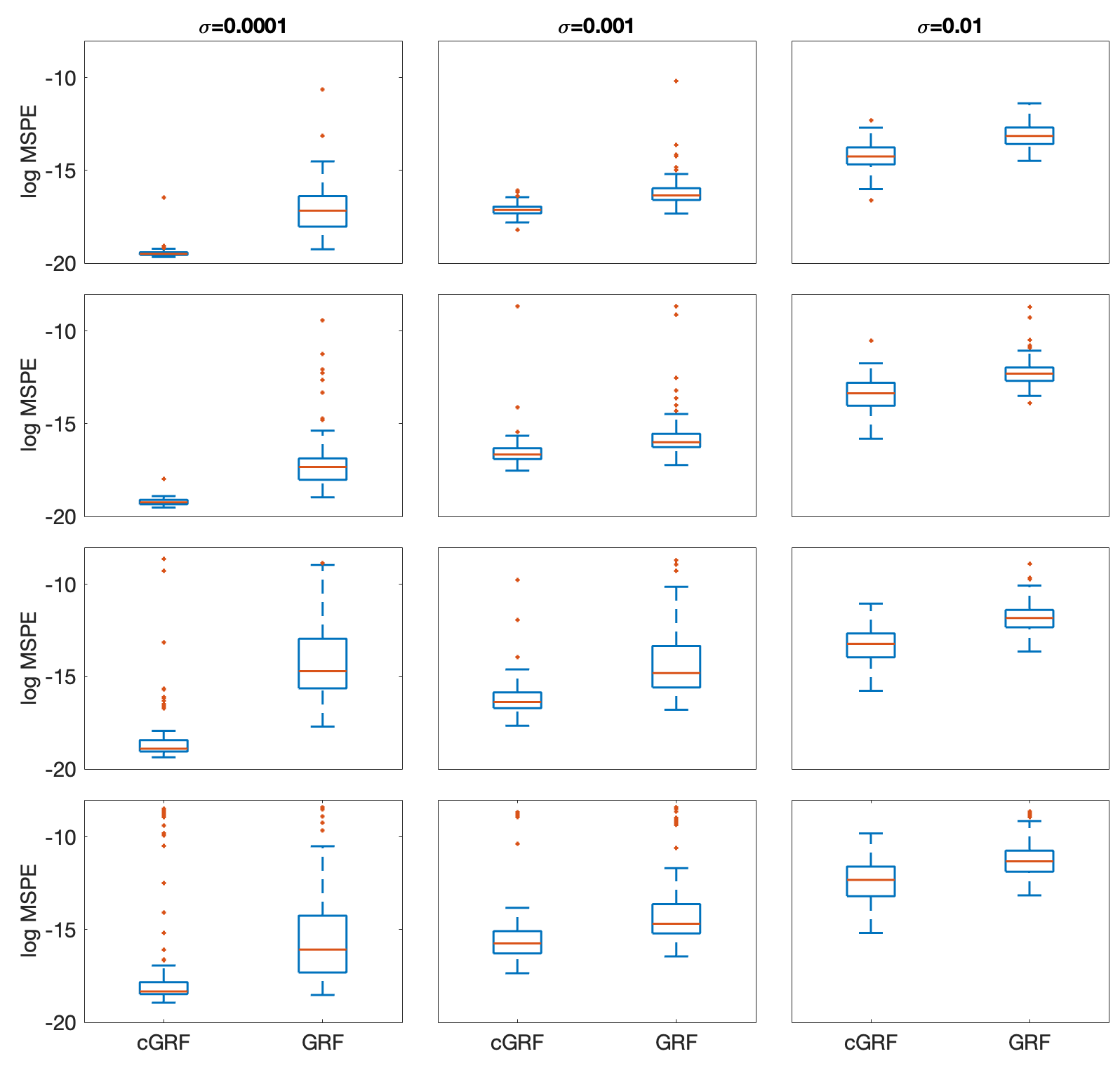} 
    \includegraphics[width=0.14\linewidth]{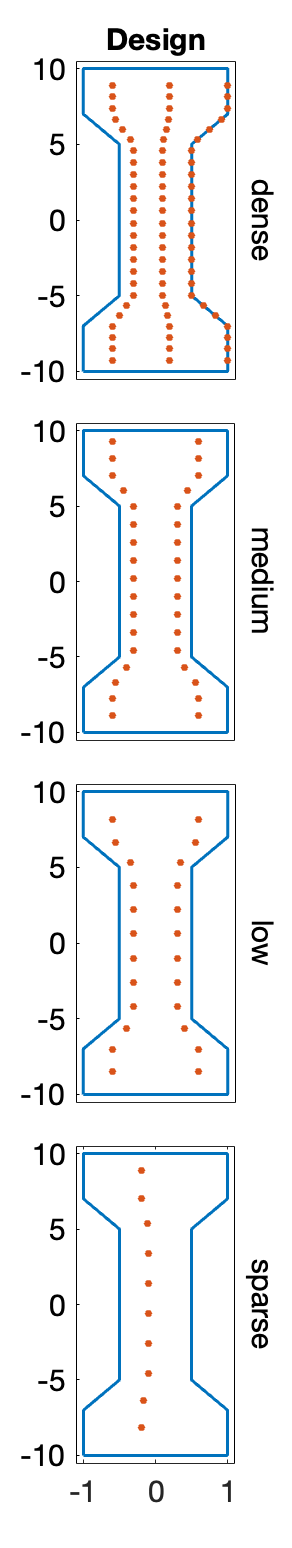}
    \caption{Box plots of log MSPE for displacement based on $100$ simulations comparing the effect of constrained and unconstrained GRF priors under different settings. Measurement error standard deviations are $\sigma=0.0001,0.001$ and $0.01$ from left to right. Designs are dense, medium, low and sparse from top to bottom. The right column shows the measurement spatial locations.}
    \label{fig:tensile}
\end{figure}

Let $y_i=u(t_i, \boldsymbol{x}_i) + \epsilon_i$, $i = 1,\ldots, N$ be noisy measurements of displacement with $\epsilon_i \overset{ind}{\sim} N(0,\sigma^2)$. We consider two prior models for $u$: an unconstrained GRF with $0$ mean and squared exponential covariance $k_0$, and a cGRF constructed from a base GRF with the same mean and covariance, i.e., $cGRF(\boldsymbol{\mathcal{L}},\boldsymbol{g},\boldsymbol{A}, 0, k_0)$ with $\boldsymbol{\mathcal{L}}=\{\mathcal{I},\mathcal{I}\}$, and $\boldsymbol{g}=\{0,0.005t\}$. The dataset is split into training, validation and testing data, at times $t\in \{0,0.3,0.6,0.9\}$, $t\in\{0.1,0.4,0.7,1\}$ and $t\in\{0.2,0.5,0.8\}$, respectively. Discrete measurements in the spatial domain are collected according to four designs shown in the right column of Figure \ref{fig:tensile}. We select optimal hyperparameters for both the $cGRF$ and its unconstrained counterpart by minimizing mean square prediction error (MSPE) based on the validation data.
Predictive performance is evaluated at the test locations via the MSPE under four different designs and three different noise levels. As shown in Figure \ref{fig:tensile}, boundary-constrained state estimation via cGRF outperforms the unconstrained case in all scenarios. Notice that outliers with large log MSPE, represented by orange dots in Figure \ref{fig:tensile}, typically occur when the numerical optimizer for hyperparameter tuning converges to local minima, corresponding to small values of length-scale in the $x_2$ dimension.

\section{Conclusion}\label{sec-conc}

An important challenge in working with Gaussian random field models in physical, environmental, and engineering applications is the difficulty of directly enforcing linear boundary constraints on the states, while retaining their modeling and computational advantages. We developed a novel framework to construct constrained GRFs over multi-dimensional, convex domains by transforming so-called base GRFs, and inheriting their smoothness properties. Importantly, our definition of a boundary-constrained process does not require establishing the existence of conditioned measures to constrain states over continuous boundaries. We showcased the new cGRF model in a variety of applications where boundary constraints add important physics-based information to the inference procedure and lead to performance improvement over their unconstrained counterparts. Applications of this framework are wide-ranging and include any problem in which the user has prior knowledge about the state's behavior at the boundaries.

Our case study highlights the potential for impact of this method in material science engineering where high fidelity reduced order models and simulation are an important growth area for the certification and qualification of additively manufactured components. We explored the example of tensile tests where one typically knows the displacements at the grips (e.g., a fixed lower grip and a prescribed upward motion at the upper grip) in advance. Incorporating these fixed boundary conditions into a cGRF-based inference framework ensures that the predicted displacement field exactly satisfies the physical constraints at the specimen ends, resulting in more accurate full-field predictions and more realistic uncertainty quantification. This improved inference propagates directly to the stress–strain curve and the estimation of material properties, such as Young’s modulus and failure strain, which depend sensitively on the accuracy of the strain measurements derived from the displacement field. Such enhancements are particularly crucial for additive manufacturing processes, where materials often exhibit anisotropic, process-dependent mechanical behavior \citep{Cai2025-ln}. By capturing the true mechanical response with higher fidelity, the cGRF approach supports the qualification of additively manufactured parts for demanding, high-performance applications while potentially reducing the extent of destructive testing needed for validation.

Other modeling extensions include using cGRFs with fixed-state boundary constraints as local models for domain partitions to ensure state continuity. By partitioning the domain into regions each modeled by a stationary GRF, \citet{Gramacy2008TreedGP, Gramacy2015LocalGP} and \citet{Huang2025StickGP} model global nonstationarity in the data, and enable distributed-memory computation in today's high-performance-computing environments. A drawback is that this induces discontinuities in the underlying smooth states. In such cases, cGRF priors may be joined at the boundaries to enforce state continuity, yielding more physically realistic models. Moreover, extensions to enforce state differentiability at the partition joints are possible with cGRFs constrained on linear combinations of the state and its derivatives.
Second, Bayesian data-driven discovery of dynamical systems is an important open area \citep{North2025DiscoveryPDE}, and cGRF models of system states enable GRF-assisted algorithms \citep{Chen2020Lulu}. 
Finally, while the examples in our paper focus on $1$-dimensional outputs, the cGRF framework naturally generalizes to multi-dimensional outputs. Different choices of projection and weight functions, as well as the potential to relax of the convexity assumption on the input domains are also worth further investigation and have the potential to broaden the applicability and flexibility of the approach. In general, the use of prior cGRF models to enforce continuous constraints on states leads to further progress towards more realistic physics-informed uncertainty quantification.

\section{Disclosure statement}\label{disclosure-statement}

No conflicts of interest exist, to the best of our knowledge. 


\section{Data Availability Statement}\label{data-availability-statement}

All data used in this manuscript was simulated.

\bibliography{bibliography}

\end{document}